\newtheorem{proposition}{Proposition}
\begin{document}

\title{Energy-Efficient Transceiver Design for Hybrid Sub-Array Architecture MIMO Systems}
\author{\IEEEauthorblockN{Shiwen~He\IEEEauthorrefmark{1}\IEEEauthorrefmark{2}, Chenhao Qi\IEEEauthorrefmark{1}, Yongpeng Wu\IEEEauthorrefmark{3}, and Yongming~Huang\IEEEauthorrefmark{1}}\\
\begin{flushleft}
\IEEEauthorblockA{\IEEEauthorrefmark{1}School of Information Science and Engineering, Southeast University, Nanjing 210096, China}\\
\IEEEauthorblockA{\IEEEauthorrefmark{2}Key Laboratory of Cognitive Radio and Information Processing, Ministry of Education, Guilin University of Electronic Technology, Guilin 541004. }\\
\IEEEauthorblockA{\IEEEauthorrefmark{3}Institute for Communications Engineering,  Technical University of Munich, Theresienstrasse 90, D-80333 Munich, Germany.}\\
Corresponding author: Y. Huang, (huangym@seu.edu.cn), S. He (shiwenhe@seu.edu.cn)\\
Manuscript received Nov. 22, 2016; accepted Jan. 3, 2017, date of current version Jan. 4, 2017.\\
This work was supported by National Natural Science Foundation of China under Grants 61471120, 61422105, and 61302097
, Key Laboratory of Cognitive Radio and Information Processing Ministry of Education (Guilin University of Electronic Technology) under Grants CRKL160203, 863 Program of China under Grant 2015AA01A703, and Natural Science Foundation of Jiangsu Province under Grant BK20161428
.
\end{flushleft}
}
\maketitle
\vspace{-.6 in}

\begin{abstract}
Millimeter-wave (mmWave) communication operated in frequency bands between 30 and 300 GHz has attracted extensive attention due to the potential ability of offering orders of magnitude greater bandwidths combined with further gains via beamforming and spatial multiplexing from multi-element antenna arrays. MmWave system may exploit the hybrid analog and digital precoding to achieve simultaneously the diversity, array and multiplexing gain with a lower cost of implementation. Motivated by this, in this paper, we investigate the design of hybrid precoder and combiner with sub-connected architecture, where each radio frequency chain is connected to only a subset of base station (BS) antennas from the perspective of energy efficient transmission. The problem of interest is a non-convex and NP-hard problem that is difficult to solve directly. In order to address it, we resort to design a two-layer optimization method to solve the problem of interest by exploiting jointly the interference alignment and fractional programming. First, the analog precoder and combiner are optimized via the alternating-direction optimization method (ADOM) where the phase shifter can be easily adjusted with an analytical structure. Then, we optimize the digital precoder and combiner based on an effective multiple-input multiple-output (MIMO) channel coefficient. The convergence of the proposed algorithms is proved using the monotonic boundary theorem and fractional programming theory. Extensive simulation results are given to validate the effectiveness of the presented method and to evaluate the energy efficiency performance under various system configurations.
\end{abstract}

\begin{IEEEkeywords}
Millimeter-wave communication, multiple-input multiple-output (MIMO) system, analog precoding and combining, interference alignment, energy efficiency.
\end{IEEEkeywords}

\section{Introduction}

Recently, data traffic has suffered an exponential growth due to the rapid proliferation of wireless devices, which are creating a spectrum crisis at the current wireless frequency bands. A variety of communication and signal processing techniques are currently being pursued for improvement of wireless rate and efficient use of the available spectrum, such as multiple-input multiple-output (MIMO) technologies~\cite{WirelessGhosh2010,WirelessDamn2011,TITGou2010,TITGoma2011,WirelssBi2015,WirelessPya2015}. Despite these efforts, meeting the dramatically increasing data demands of wireless devices and applications is still a tremendous challenge for sub-6 Gigabit Hertz wireless communications~\cite{JSACAnd2014}.

Millimeter Wave (mmWave) wireless communication systems, operating in the frequency bands from $30-300$ GHz, are emerging as a promising technology for the exploding bandwidth requirements by enabling multi-Gpbs speeds~\cite{CommBi2011,AccessRap2013,CommRoh2014,ProceddingRan2014,CommHan2015}. Recently, the advance in mmWave hardware has encouraged studying and applying mmWave for outdoor cellular networks and short distance communication. However, the shortcoming is that mmWave signal may experience an order of magnitude increase in free-space path-loss due to the ten-fold increase in carrier frequency compared to sub-6 GHz frequency bands. Fortunately, the decrease in wavelength leads to a dramatic increase of the number of antenna elements within a given antenna size such that large arrays can provide narrow and high-gain beams to overcome the path-loss~\cite{OnlineYong2010,TVTForenza2007,JSACXu2002,TNSingh2011,MLChang2012,ISSCCHong2015,OnlineMat2011,Standard802.11ad}. Furthermore, large arrays may significantly improve the spectrum efficiency by transmitting simultaneously multiple data streams~\cite{CLMolisch2004,JSACWang2009,TWCXiao2015,TAPHaneda2013,TSPZhang2005}.

Unlike lower frequency MIMO systems, the large arrays combined with high cost and power consumption of the mixed analog/digital signal components makes it difficult to assign an RF chain per antenna, and perform all the signal processing in the baseband. To achieve both the diversity gain and the multiplexing gain in mmWave wireless communication, more recently, the research on the hybrid precoder and combiner have attracted extensive attention from both academia and industry. In such systems, analog beamforming (phase shifters) compensates the large path loss at mmWave bands, while digital beamforming provides the necessary flexibility to perform advanced multiuser MIMO techniques such as multi-streams MIMO. However, the joint design of hybrid precoder and combiner is challenging under a set of practical constraints. Most existing algorithms were designed for the hybrid precoding with fully-connected architecture, where each RF chain is connected to all BS antennas via phase shifters~\cite{TWCAyach2014,TVTKim2015,JSTSPAlk2014,CLLiang2014,TWCAlk2015,TSPLee2015,TVTZhu2016}.

In recent years, with the rapid development of the mobile communications and the ever growing demand for data rate,  the energy consumption problem has become very severe. Particularly with the widespread popularity of smart phones, the problem of battery energy consumption has become a serious constraint for users who wish to experience high-speed data transmission and multimedia services~\cite{WCLI2013,TWCLI2015,TVTLi2015,TCOMHE2013,TSPHE2014}. Li \emph{et al} investigated the energy efficiency (EE) optimization problem using the double auction theory for relay selection networks~\cite{TWCLI2015}. The authors further investigated the EE problem by jointly optimizing the position and serving range of the relay station in~\cite{TVTLi2015}. In~\cite{TCOMHE2013} and~\cite{TSPHE2014}, the authors investigated the EE optimization problem for coordinated multicell multiuser cellular networks. For the conventional digital precoding~\cite{TWCLI2015,TVTLi2015,TCOMHE2013,TSPHE2014}, each antenna requires a dedicated energy-intensive radio frequency (RF) chain (including digital-to-analog converter, up converter, etc), whose energy consumption is a large part (about 250mW per RF chain~\cite{TCOMAmdori2015}) of the total energy consumption for wide bandwidth communication systems.

In this paper, we investigate the energy efficient design of the hybrid precoder and combiner with sub-connected architecture, where each RF chain is connected to only a subset of transmitting antennas, for point-to-point mmWave MIMO wireless communication. The design goal is to maximize the system energy efficiency. We design the analog precoder by resorting the idea of interference alignments. Finally, a two-layer optimization method is designed to solve the problem of interest. The contribution of this paper can be summarized as follows.
\begin{enumerate}
\item Inspired by the idea of the interference alignment (IA) for interference channels, the analog precoder and combiner are optimized via the alternating-direction optimization method (ADOM) where the phase shifter can be easily adjusted with an analytical structure.
\item After the analog precoder and combiner are designed, by regarding the sub-array point-to-point communication system as an interference channel, the digital precoder and combiner based on an effective MIMO channel coefficient are optimized.

\item The convergence of the proposed algorithms is proved based on the monotonic boundary theorem and fractional programming theory.

\item Extensive simulation results are provided to validate the effectiveness of the presented method and to evaluate the energy efficiency performance under various system configurations.
\end{enumerate}

The remainder of this paper is organized as follows. The system model is described in Section \uppercase\expandafter{\romannumeral2}. In section \uppercase\expandafter{\romannumeral3} an effective optimization algorithm based on IA is proposed. In Section \uppercase\expandafter{\romannumeral4}, numerical evaluations of these algorithms are carried out and conclusions are finally drawn in Section \uppercase\expandafter{\romannumeral5}.

The following notations are used throughout this paper. Bold lowercase and uppercase letters represent column vectors and matrices, respectively. The superscripts $\left(\cdot\right)^{H}$, and $\left(\cdot\right)^{-1}$ represent the conjugate transpose operator, and the matrix inverse, respectively. $\bm{A}\left(:,n\right)$ and $\bm{A}\left(m,n\right)$ represent the $n$th column and the $\left(m{\rm th},n{\rm th}\right)$ element of matrix $\bm{A}$. $\bm{a}\left(i\right)$ is the $i$th entry of $\bm{a}$. $\left\|\bm{A}\right\|_{\mathcal{F}}$ denotes the Euclidean norm for vectors and the Frobenius norm of matrix $\bm{A}$. $\arg\left(a\right)$ denotes the phase of $a$. The probability density function (pdf) of a circular complex Gaussian random vector with mean $\bm{\mu}$ and covariance matrix $\bm{\Sigma}$ is denoted as $\mathcal{CN}\left(\bm{\mu},\bm{\Sigma}\right)$.  $\mathbb{C}$ are the real number field. Expectation is denoted by $\mathbb{E}\left[\cdot\right]$ and the real part of a variable is denoted by $\Re\left(\cdot\right)$.

\section{System Model}

Consider a point-to-point MIMO channel where each transceiver is equipped with a hybrid MIMO processor.  For simplicity, assume that the transceivers have $N_{r}$ transmit/receive RF antenna arrays, where each RF array is fed by a separate RF chain. Each array consists of $N_{RF}$ antenna elements with each connected to one phase shifter. The transmitter has $N_{t}=N_{r}N_{RF}$ antennas and sends $N_{r}$ independent data streams to the receiver which is also equipped with $N_{t}$ antennas and $N_{r}$ RF chains under the assumption of $N_{r}\leq N_{t}$ to assure the feasibility of the degrees of freedom in MIMO channel~\cite{TITGou2010}. The symbols transmitted by the transmitter are processed by a baseband precoder $\bm{F}_{B}$ of dimension $N_{r}\times N_{r}$ and then up-converted through $N_{r}$ RF chains before being precoded by an RF preocder $\bm{F}_{R}$ of dimension $N_{t}\times N_{r}$, as shown in~Fig.~\ref{SplitSystemModel}. It should be pointed out that the baseband precoder $\bm{F}_{B}$ enables both amplitude and phase modifications, while only phase changes can be realized by $\bm{F}_{R}$ as it is implemented using analog phase shifter~\cite{BookFir2010}. The structure of $\bm{F}_{R}$ is given as
\begin{equation}\label{SplitBeam01}
\bm{F}_{R}=\begin{bmatrix}
\bm{f}_{1}&\cdots&\bm{0}\\
\vdots& \ddots&\vdots\\
\bm{0}& \cdots&\bm{f}_{N_{r}}\\
\end{bmatrix},
\end{equation}
where $\bm{f}_{k}=\frac{1}{\sqrt{N_{t}}}\left[e^{j\theta_{k,1}},\cdots,e^{j\theta_{k,N_{RF}}}\right]^{T}$ denotes the  $N_{RF}\times 1$ steering vector of phases for the $k$th antenna array (or RF chain) to point at some given azimuth direction.
\begin{figure}[h]
\centering
\captionstyle{flushleft}
\onelinecaptionstrue
\includegraphics[width=1\columnwidth,keepaspectratio]{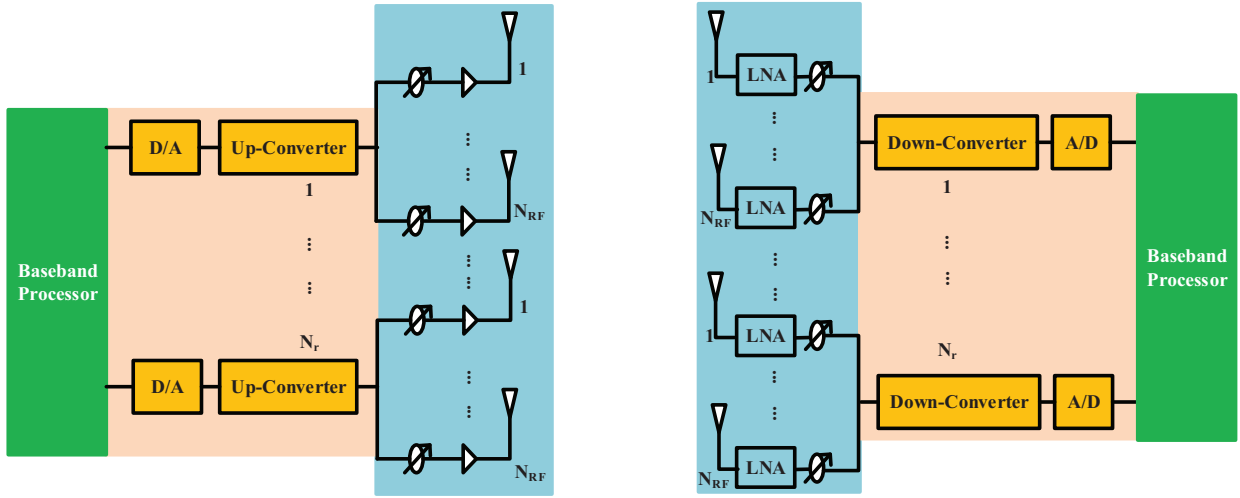}\\
\caption{Split Hybrid Precoder/Decoder Architecture.}
\label{SplitSystemModel}
\end{figure}
Using these notations, the transmitted signal at the transmitter is denoted as $\bm{x}=\bm{F}_{R}\bm{F}_{B}\bm{s}$, where $\bm{s}\in\mathbb{C}^{N_{r}\times 1}$ is a signal vector of complex information symbol to be transmitted with $\bm{s}\sim\mathcal{CN}\left(\bm{0},\bm{I}_{N_{r}}\right)$\footnote{Due to the fact that the number of spatial data streams only affects the number of the columns of the baseband precoding matrix $\bm{F}_{B}$, our proposed algorithms are still feasible to the case where the number of spatial data streams is less than $N_{r}$.}. For simplicity, considering a narrow block-fading propagation channel~\cite{TWCAyach2014,JSTSPAlk2014}, then the received signal vector at the receiver is then given by
\begin{equation}\label{SplitBeam02}
\bm{y}=\bm{H}\bm{F}_{R}\bm{F}_{B}\bm{s}+\bm{n},
\end{equation}
where $\bm{n}$ is the vector of i.i.d. $\mathcal{CN}\left(0,\sigma_{n}^{2}\bm{I}_{N_{t}}\right)$, $\bm{H}\in \mathbb{C}^{N_{t}\times N_{t}}$ denotes the channel coefficients between the transmitter and the receiver with the following structure:
\begin{equation}\label{SplitBeam03}
\bm{H}=\begin{bmatrix}
\bm{H}_{1,1}&\cdots&\bm{H}_{1,N_{r}}\\
\vdots& \ddots&\vdots\\
\bm{H}_{N_{r},1}& \cdots&\bm{H}_{N_{r},N_{r}}\\
\end{bmatrix},
\end{equation}
where $\bm{H}_{m,n}\in \mathbb{C}^{N_{RF}\times N_{RF}}$, $m,n=1,\cdots,N_{r}$, and $\mathbb{E}\left[\|\bm{H}\|_{\mathcal{F}}^{2}\right]=N_{t}^{2}$. Moreover, to enable precoding, we assume that the channel $\bm{H}$ is known perfectly and instantaneously to both the transmitter and receiver. In practical systems, channel state information (CSI) at the receiver can be obtained via training and subsequently shared with the transmitter via limited feedback~\cite{JSACWang2009,TWCXiao2015,TWCAyach2014,JSTSPAlk2014}. In the same way as the transmitter, the steering vector for the $l$th RF array at the receiver is written as the $N_{RF}\times 1$ vector $\bm{g}_{l}=\frac{1}{\sqrt{N_{t}}}\left[e^{j\vartheta_{l,1}},\cdots,e^{j\vartheta_{l,N_{RF}}}\right]^{T}$, where $\bm{\vartheta}_{l}$ is the azimuth steering direction for the $l$th RF array at the receiver. Thus, the overall RF combiner at the receiver is given by
\begin{equation}\label{SplitBeam04}
\bm{G}_{R}=\begin{bmatrix}
\bm{g}_{1}&\cdots&\bm{0}\\
\vdots& \ddots&\vdots\\
\bm{0}& \cdots&\bm{g}_{N_{r}}\\
\end{bmatrix},
\end{equation}
Then, the baseband received signal vector at the receiver can be written as
\begin{equation}\label{SplitBeam05}
\widetilde{\bm{s}}=\bm{G}_{B}^{H}\widetilde{\bm{y}}=\bm{G}_{B}^{H}\bm{G}_{R}^{H}\bm{y}
=\bm{G}_{B}^{H}\bm{G}_{R}^{H}\bm{H}\bm{F}_{R}\bm{F}_{B}\bm{s}+\bm{G}_{B}^{H}\bm{G}_{R}^{H}\bm{n},
\end{equation}
where $\bm{G}_{B}\in \mathbb{C}^{N_{r}\times N_{r}}$ is the digital combiner at receiver. When Gaussian symbols are transmitted over the mmWave channel, the spectral efficiency achieved is given by~\cite{JSACGold2003,BookTse2005}
\begin{subequations}\label{SplitBeam06}
\begin{align}
R&=\log_{2}\left(\left|\bm{I}+\bm{R}_{n}^{-1}\bm{G}^{H}\bm{H}\bm{F}
\bm{F}^{H}\bm{H}^{H}\bm{G}\right|\right)\label{SplitBeam06a}\\
&=\log_{2}\left(\left|\bm{I}+\sigma_{n}^{-2}\bm{H}\bm{F}\bm{F}^{H}\bm{H}^{H}\right|\right),\label{SplitBeam06b}
\end{align}
\end{subequations}
where $\bm{G}=\bm{G}_{R}\bm{G}_{B}$, $\bm{F}=\bm{F}_{R}\bm{F}_{B}$, $\bm{R}_{n}=\sigma_{n}^{2}\bm{G}^{H}\bm{G}=\sigma_{n}^{2}\bm{G}_{B}^{H}\bm{G}_{R}^{H}\bm{G}_{R}\bm{G}_{B}$. The second equality is based on the
fact that $\left|\bm{I}+\bm{A}\bm{B}\right|=\left|\bm{I}+\bm{B}\bm{A}\right|$.


\section{Transceiver Design for Maximizing System Energy Efficiency}\label{TransceiverDesign}

In this section, we design hybrid mmWave precoder and combiner to maximize the system utility function defined as:
\begin{equation}\label{SplitBeam08}
\max_{\bm{F}_{R},\bm{F}_{B},\bm{G}_{R},\bm{G}_{B}}\frac{R}{P_{con}}
~s.t.~\bm{F}_{R}\in \mathcal{F}_{RF}, \bm{G}_{R}\in \mathcal{G}_{RF}, \left\|\bm{F}_{R}\bm{F}_{B}\right\|_{\mathcal{F}}^{2}\leq P,
\end{equation}
where $\mathcal{F}_{RF}$ and $\mathcal{G}_{RF}$ are respectively the set of feasible RF precoders at transmitter and receiver, $P_{con}$ is the power consumption at transceiver and given by
\begin{subequations}\label{SplitBeam09}
\begin{align}
&P_{con}=\eta\left\|\bm{F}_{R}\bm{F}_{B}\right\|_{\mathcal{F}}^{2}+P_{T}+P_{R},\\
&P_{T}=N_{r}\left(P_{tRFC}+P_{DAC}\right)+N_{t}\left(P_{PA}+P_{PS}\right)+P_{BB},\\
&P_{R}=N_{r}\left(P_{rRFC}+P_{ADC}\right)+N_{t}\left(P_{LNA}+P_{PS}\right)+P_{BB},
\end{align}
\end{subequations}
where $P_{tRFC}$ and $P_{rRFC}$ represent the power consumed by the RF chain at transmitter and receiver, respectively. $P_{DAC}$ and $P_{ADC}$ denote respectively the power consumed by the DAC and ADC. $P_{PA}$, $P_{LNA}$, $P_{PS}$ and $P_{BB}$ denote respectively the power consumed by the power amplifier, the low noise amplifier, the phase shifter, and the baseband processor. $\eta\geq 1$ is a constant which accounts for the inefficiency of the power amplifier~\cite{TWCNg201209}.  Note that $\left\|\bm{F}_{R}\bm{F}_{B}\right\|_{\mathcal{F}}^{2}=\frac{N_{RF}}{N_{t}}\left\|\bm{F}_{B}\right\|_{\mathcal{F}}^{2}$, then problem~(\ref{SplitBeam08}) can be rewritten as
\begin{equation}\label{SplitBeam10}
\max_{\bm{F}_{R},\bm{F}_{B},\bm{G}_{R},\bm{G}_{B}}\frac{R}{P_{con}}
~s.t.~\bm{F}_{R}\in \mathcal{F}_{RF}, \bm{G}_{R}\in \mathcal{G}_{RF}, \frac{N_{RF}}{N_{t}}\left\|\bm{F}_{B}\right\|_{\mathcal{F}}^{2}\leq P,
\end{equation}
and $P_{con}=\eta\frac{N_{RF}}{N_{t}}\left\|\bm{F}_{B}\right\|_{\mathcal{F}}^{2}+P_{T}+P_{R}$. Unfortunately, the fraction form of the objective and the non-convex constraints on $\bm{F}_{R}$ and $\bm{G}_{R}$ and the fractional form of the objective lead the optimization to be non-convex~\cite{JstorJagan1966,JstorDink1967} and NP-hard~\cite{TWCNg201209,TCOMHE2013,TSPHE2014}. In other words, it is difficult to find globally optimal solution for problem~(\ref{SplitBeam08}) within polynomial time. In what follows, we will use two steps to solve design the hybrid precoding. First, an alternative optimization RF precoding and combiner scheme is presented using the ideas of the IA used in interference channels~\cite{TITGou2010,TITGoma2011}. Then, for fixed RF precoding and combiner, we further present an optimization algorithm to solve the energy efficiency problem via the relation between the user rate and the minimum mean square error (MMSE)~\cite{TWCChri2008} and the fractional programming theory~\cite{JstorJagan1966,JstorDink1967}.

\subsection{Analog Precoder and Combiner Optimization}

Without considering the digital baseband precoder and combiner, similar to the interference channel, the analog shifters of the transceiver can be regarded as interference channel networks where the design of precoder at each transmitter aims to minimize the interference leaking to other receivers and put all the interference into the same subspace. Different from the conventional design of the IA algorithm~\cite{TITGou2010,TITGoma2011}, the elements of the analog precoder and combiner are subjected to constant-envelop with only changing of the phase making the problem more intractable~\cite{TWCMoh2012}.

In this subsection we design distributed analog precoder and combiner for the interference channel networks consisted by the analog shifters. In what follows, we start with arbitrary phase shifter and iteratively update phase shifter $\bm{F}_{R}$ and $\bm{G}_{R}$ with aiming to reduce the leakage of interference. The quality of the analog precoder is measured by the power in the leakage interference at each receiver, i.e., the interference power in the received signal after the analog combiner is applied. The total interference leakage at receive sub-array $k$ due to all undesired transmit sub-array ($j\neq k$) is given by
$$\overrightarrow{I}_{k}=\bm{g}_{k}^{H}\sum_{j\neq k}^{N_{r}}\bm{H}_{k,j}\bm{f}_{j}\bm{f}_{j}^{H}\bm{H}_{k,j}^{H}\bm{g}_{k}.$$
Similarly, in the reciprocal network,  the total interference leakage at receive sub-array $k$ due to all undesired transmit sub-array ($j\neq k$) is given by
$$\overleftarrow{I}_{k}=\bm{f}_{k}^{H}\sum_{j\neq k}^{N_{r}}\bm{H}_{j,k}^{H}\bm{g}_{j}\bm{g}_{j}^{H}\bm{H}_{j,k}\bm{f}_{k}.$$

In the sequel, we aim to develop an iterative algorithm alternating between the transmitter and receiver to update their analog phase shifter to minimize their total leakage interference. For receive sub-array $k$, we aim to solve the following optimization problem to obtain the receive phase shifter vector.
\begin{equation}\label{SplitBeam11}
\min_{\bm{g}_{k}}\overrightarrow{I}_{k}=\min_{\bm{g}_{k}}\bm{g}_{k}^{H}
\sum_{j\neq k}^{N_{r}}\bm{H}_{k,j}\bm{f}_{j}\bm{f}_{j}^{H}\bm{H}_{k,j}^{H}
\bm{g}_{k}.
\end{equation}
It is easy to see that the receive sub-array $k$ chooses its phase shifter to minimize the leakage interference due to all undesired
transmit sub-array ($j\neq k$). Unlike the method used in~\cite{TITGoma2011}, the single eigenvalue decomposition can not be directly used to obtain the phase of the shifter vector. After some basic mathematic manipulation, it is easy to see that the objective in~(\ref{SplitBeam11}) can be expanded as follows
\begin{equation}\label{SplitBeam12}
\bm{g}_{k}^{H}
\sum_{j\neq k}^{N_{r}}\bm{H}_{k,j}\bm{f}_{j}\bm{f}_{j}^{H}\bm{H}_{k,j}^{H}\bm{g}_{k}=
\left|\bm{g}_{k}\left(l\right)\right|^{2}\overrightarrow{\bm{H}}_{k}\left(l,l\right)+\bm{g}_{k}^{\left(-l\right)H}
\overrightarrow{\bm{H}}_{k}^{\left(-l,-l\right)}\bm{g}_{k}^{\left(-l\right)}+2\Re\left\{\bm{g}_{k}^{H}
\left(l\right)\left(\overrightarrow{\bm{H}}_{k}^{\left(-l\right)}\left(:,l\right)\right)^{H}\bm{g}_{k}^{\left(-l\right)}\right\},
\end{equation}
where $\overrightarrow{\bm{H}}_{k}=\sum\limits_{j\neq k}^{N_{r}}\bm{H}_{k,j}\bm{f}_{j}\bm{f}_{j}^{H}\bm{H}_{k,j}^{H}$ is a Hermitian positive semidefinite matrix,
$\bm{g}_{k}^{\left(-l\right)}$ and $\bm{H}_{k}^{\left(-l\right)}\left(:,l\right)$ are a column vector of dimension $\left(N_{RF}-1\right)$ obtained by removing the $l$th entry of $\bm{g}_{k}$ and the $l$th entry of $\overrightarrow{\bm{H}}_{k}\left(:,l\right)$, respectively, $\overrightarrow{\bm{H}}_{k}^{\left(-l,-l\right)}$ is a matrix of dimension $\left(N_{RF}-1\right)\times \left(N_{RF}-1\right)$ obtained by removing the $l$th row and $l$th column of matrix $\overrightarrow{\bm{H}}_{k}$. Note that in~(\ref{SplitBeam12}) the first two item $\left|\bm{g}_{k}\left(l\right)\right|^{2}\overrightarrow{\bm{H}}_{k}\left(l,l\right)$ and $\bm{g}_{k}^{\left(-l\right)H}\overrightarrow{\bm{H}}_{k}^{\left(-l,-l\right)}\bm{g}_{k}^{\left(-l\right)}$ are independent of the phase applied by the $l$th antenna in the $k$th sub-array. It also means that the dependence of the objective of~(\ref{SplitBeam11}) on the phase shift of the $l$th antenna in the $k$th sub-array is fully captured in the third item which can be minimized by anti-phasing $\bm{g}_{k}\left(l\right)$ and $\left(\overrightarrow{\bm{H}}_{k}^{\left(-l\right)}\left(:,l\right)\right)^{H}\bm{g}_{k}^{\left(-l\right)}$. Therefore, to minimize the objective function in~(\ref{SplitBeam11}) with a constant envelope constraint of phase shifter~\cite{TWCAyach2014,JSTSPAlk2014}, all elements of the vector $\bm{g}_{k}$ must satisfy the following optimization condition
\begin{equation}\label{SplitBeam13}
\bm{g}_{k}\left(l\right)=\frac{1}{\sqrt{N_{t}}}e^{j\left(\arg\left(\left(\overrightarrow{\bm{H}}_{k}^{\left(-l\right)}
\left(:,l\right)\right)^{H}\bm{g}_{k}^{\left(-l\right)}\right)-\pi\right)}.
\end{equation}
Thus, an algorithm summarized as Algorithm~\ref{SplitBeamRShifterOptimization} is designed to find the optimal array of phase shifts $\bm{g}_{k}$ for fixing $\bm{f}_{j}$ where $\tau$, $\varrho_{\tau}$, and $\varepsilon$ denote respectively the iterative counter, the objective of~(\ref{SplitBeam11}) and the stop threshold.
\begin{algorithm}
\caption{Phase Shifter Optimization for Receiver}\label{SplitBeamRShifterOptimization}
\begin{algorithmic}[1]
\STATE Let $\tau=0$ and $\varrho_{\tau}=0$, initial the phase shifts of $\bm{g}_{k}$.\label{SplitBeamRShifterOptimization01}
\FOR {$l = 1$ to $N_{RF}$} \label{SplitBeamRShifterOptimization02}
\STATE $\bm{g}_{k}\left(l\right)=\frac{1}{\sqrt{N_{t}}}e^{j\left(\arg\left(\left(\overrightarrow{\bm{H}}_{k}^{\left(-l\right)}
\left(:,l\right)\right)^{H}\bm{g}_{k}^{\left(-l\right)}\right)-\pi\right)}$\\
\ENDFOR
\STATE Update the objective of~(\ref{SplitBeam11}) with the updated $\bm{g}_{k}$, and obtain $\varrho_{\tau+1}$.
\STATE If $\left|\varrho_{\tau+1}-\varrho_{\tau}\right|\leq\varepsilon$, then stop iteration, otherwise let $\tau=\tau+1$ and go to step~\ref{SplitBeamRShifterOptimization02}.
\end{algorithmic}
\end{algorithm}
Note that in Algorithm~\ref{SplitBeamRShifterOptimization}, the mathematical operation includes complex multiplication, complex addition and phase operation. Furthermore, each sub-array can simultaneously update its analog beamforming vector that facilitates parallel hardware implementation.

Similarly, for transmit sub-array $k$, we aim to solve the following optimization problem to obtain the receive phase shifters.
\begin{equation}\label{SplitBeam14}
\min_{\bm{f}_{k}}\overleftarrow{I}_{k}=\min_{\bm{f}_{k}}\bm{f}_{k}^{H}\sum_{j\neq k}^{N_{r}}\bm{H}_{j,k}^{H}\bm{g}_{j}\bm{g}_{j}^{H}\bm{H}_{j,k}\bm{f}_{k}.
\end{equation}
Likely, the objective in~(\ref{SplitBeam14}) can be expanded as follows
\begin{equation}\label{SplitBeam15}
\bm{f}_{k}^{H}\sum_{j\neq k}^{N_{r}}\bm{H}_{j,k}^{H}\bm{g}_{j}\bm{g}_{j}^{H}\bm{H}_{j,k}\bm{f}_{k}=
\left|\bm{f}_{k}\left(l\right)\right|^{2}\overleftarrow{\bm{H}}_{k}\left(l,l\right)+\bm{f}_{k}^{\left(-l\right)H}
\overleftarrow{\bm{H}}_{k}^{\left(-l,-l\right)}\bm{f}_{k}^{\left(-l\right)}+2\Re\left\{\bm{f}_{k}^{H}
\left(l\right)\left(\overleftarrow{\bm{H}}_{k}^{\left(-l\right)}\left(:,l\right)\right)^{H}\bm{f}_{k}^{\left(-l\right)}\right\},
\end{equation}
where $\overleftarrow{\bm{H}}_{k}=\sum\limits_{j\neq k}^{N_{r}}\bm{H}_{j,k}^{H}\bm{g}_{j}\bm{g}_{j}^{H}\bm{H}_{j,k}$ is a Hermitian positive semidefinite matrix,
$\bm{f}_{k}^{\left(-l\right)}$ and $\overleftarrow{\bm{H}}_{k}^{\left(-l\right)}\left(:,l\right)$ are a column vector of dimension $\left(N_{RF}-1\right)$ obtained by removing the $l$th entry of $\bm{f}_{k}$ and the $l$th entry of $\overleftarrow{\bm{H}}_{k}\left(:,l\right)$, respectively, $\overleftarrow{\bm{H}}_{k}^{\left(-l,-l\right)}$ is a matrix of dimension $\left(N_{RF}-1\right)\times \left(N_{RF}-1\right)$ obtained by removing the $l$th row and $l$th column of matrix $\overleftarrow{\bm{H}}_{k}$. Note that in~(\ref{SplitBeam15}) the first two item $\left|\bm{f}_{k}\left(l\right)\right|^{2}\bm{H}_{k}\left(l,l\right)$ and $\bm{f}_{k}^{\left(-l\right)H}\overleftarrow{\bm{H}}_{k}^{\left(-l,-l\right)}\bm{f}_{k}^{\left(-l\right)}$ are independent of the phase applied by the $l$th antenna in the $k$th sub-array. It also means that the dependence of the objective of~(\ref{SplitBeam14}) on the phase shift of the $l$th antenna in the $k$th sub-array is fully captured in the third item which can be minimized by anti-phasing $\bm{f}_{k}\left(l\right)$ and $\left(\overleftarrow{\bm{H}}_{k}^{\left(-l\right)}\left(:,l\right)\right)^{H}\bm{f}_{k}^{\left(-l\right)}$. Therefore, to minimize the objective of~(\ref{SplitBeam14}) subject to the constant envelope constraint of phase shifter, all elements of the vector $\bm{f}_{k}$ must satisfy the optimization condition
\begin{equation}\label{SplitBeam16}
\bm{f}_{k}\left(l\right)=\frac{1}{\sqrt{N_{t}}}e^{j\left(\arg\left(\left(\overleftarrow{\bm{H}}_{k}^{\left(-l\right)}
\left(:,l\right)\right)^{H}\bm{f}_{k}^{\left(-l\right)}\right)-\pi\right)}.
\end{equation}

Thus, a simple algorithm summarized as Algorithm~\ref{SplitBeamTShifterOptimization} is also designed to find the optimal array of phase shifts $\bm{f}_{k}$ where $\tau$, $\rho_{\tau}$, and $\varepsilon$ denote respectively the iterative counter, the objective of~(\ref{SplitBeam14}) and the stop threshold. In Algorithm~\ref{SplitBeamRShifterOptimization} and Algorithm~\ref{SplitBeamTShifterOptimization}, the major computational complexity comes from Step~\ref{SplitBeamRShifterOptimization02}. For each iteration of Step~\ref{SplitBeamRShifterOptimization02}, it needs $N_{RF}$ complex multiplications, one real number substraction, and phase operation, which can be realized using coordinate rotation digital computer (CORDIC) algorithm~\cite{CofKimar2011}. Therefore, the total computational complexity of Algorithm~\ref{SplitBeamRShifterOptimization} is about $\mathcal{O}\left(4N_{RF}^{2}+N_{RF}+N_{po}\right)$ real operations, where $N_{po}$ denotes the number of real operation occurring in phase operation.
\begin{algorithm}
\caption{Phase Shifter Optimization for Transmitter}\label{SplitBeamTShifterOptimization}
\begin{algorithmic}[1]
\STATE Let $\tau=0$ and $\rho_{\tau}=0$, initial the phase shifts of $\bm{f}_{k}$.\label{SplitBeamTShifterOptimization01}
\FOR {$l = 1$ to $N_{RF}$} \label{SplitBeamTShifterOptimization02}
\STATE $\bm{f}_{k}\left(l\right)=\frac{1}{\sqrt{N_{t}}}e^{j\left(\arg\left(\left(\overleftarrow{\bm{H}}_{k}^{\left(-l\right)}
\left(:,l\right)\right)^{H}\bm{f}_{k}^{\left(-l\right)}\right)-\pi\right)}$\\
\ENDFOR
\STATE Update the objective of~(\ref{SplitBeam14}) with the updated $\bm{f}_{k}$, and obtain $\varrho_{\tau+1}$.
\STATE If $\left|\rho_{\tau+1}-\rho_{\tau}\right|\leq\varepsilon$, then stop iteration, otherwise let $\tau=\tau+1$ and go to step~\ref{SplitBeamTShifterOptimization02}.
\end{algorithmic}
\end{algorithm}

Note that each array at transceiver can concurrently use Algorithm~\ref{SplitBeamRShifterOptimization} or Algorithm~\ref{SplitBeamTShifterOptimization} to update the combiner or precoder for each sub-array of $N_{r}$ sub-arrays. Furthermore, the aforementioned two algorithms only apply the complex multiplier operation and phase operation without needing the matrix multiplier operation and single value decomposition (SVD) operation~\cite{TWCXiao2015,TWCAyach2014,JSTSPAlk2014,TWCAlk2015,TSPLee2015}. Exploiting jointly the aforementioned Algorithm~\ref{SplitBeamRShifterOptimization} and Algorithm~\ref{SplitBeamTShifterOptimization}, an alternative optimization procedure that generates the analog precoder and combiner are summarized as Algorithm~\ref{SplitBeamCShifterOptimization} illustrated in Fig.~\ref{PhaseShifterOptimization}\footnote{Note that each sub-array architecture at receiver can be regarded as a separated receiver, i.e., the proposed algorithm can be applied to the sub-array architecture multiuser networks. In addition, the developed algorithm can be applied in massive MIMO systems, which formulates a high speed point-to-point link. }.
\begin{figure}[h]
\centering
\captionstyle{flushleft}
\onelinecaptionstrue
\includegraphics[width=1\columnwidth,keepaspectratio]{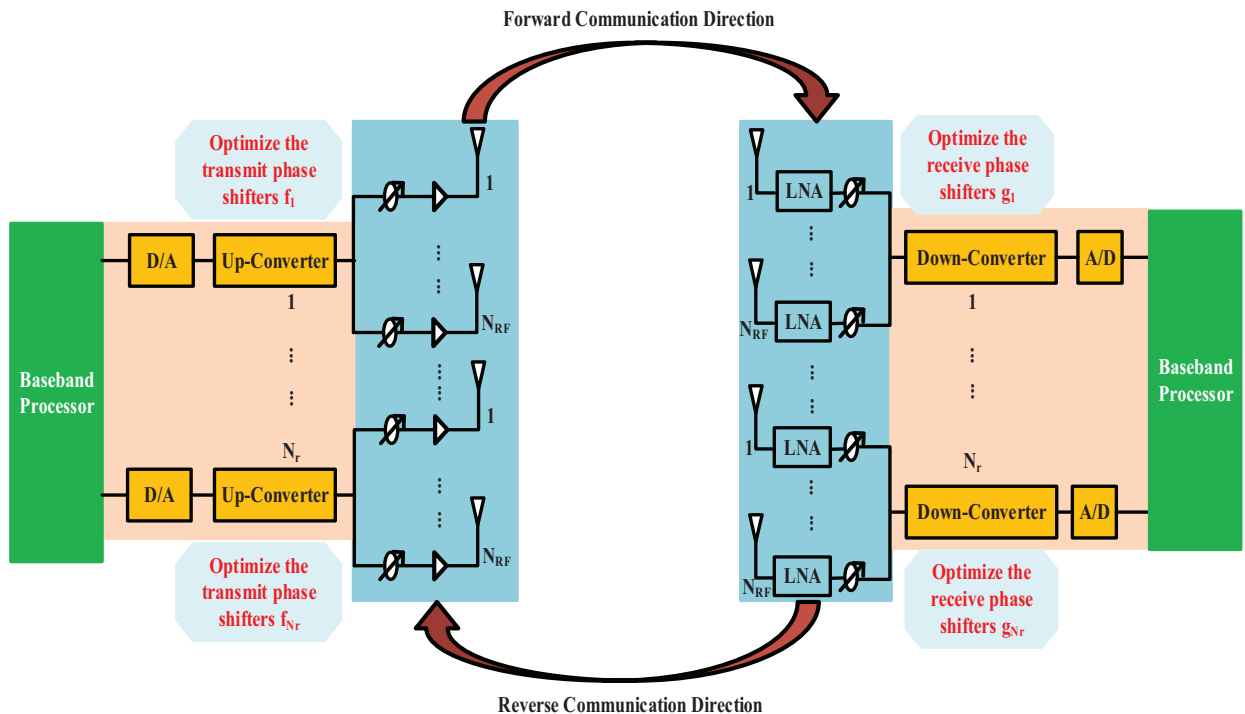}\\
\caption{Pictorial representation of the alternative optimization procedure algorithm for phase shifter optimization at transceiver where the receive directions are optimized to minimize interference power at the receivers.}
\label{PhaseShifterOptimization}
\end{figure}
\begin{algorithm}
\caption{Phase Shifter Optimization for Transmitter}\label{SplitBeamCShifterOptimization}
\begin{algorithmic}[1]
\STATE Each transmit sub-array arbitrary generate the analog phase shifter vector.\label{SplitBeamCShifterOptimization01}
\STATE Using Algorithm~\ref{SplitBeamRShifterOptimization} to update the analog phase shifter vector for each receive sub-array \label{SplitBeamCShifterOptimization02}.
\STATE Using Algorithm~\ref{SplitBeamTShifterOptimization} to update the analog phase shifter vector for each transmit sub-array.\label{SplitBeamCShifterOptimization03}
\STATE Repeat step~\ref{SplitBeamCShifterOptimization02} and~\ref{SplitBeamCShifterOptimization03}, until convergence.\label{SplitBeamCShifterOptimization04}
\end{algorithmic}
\end{algorithm}

It is not difficult to find that Algorithm~\ref{SplitBeamRShifterOptimization}, Algorithm~\ref{SplitBeamTShifterOptimization}, and Algorithm~\ref{SplitBeamCShifterOptimization} can be extended to design the hybrid precoder and combiner for the multiuser MIMO interference channels with sub-connected architecture~\cite{TWCAlk2015}. In order to show that the converge of Algorithm~\ref{SplitBeamRShifterOptimization} to Algorithm~\ref{SplitBeamCShifterOptimization} can be guaranteed, here, the total interference metric is firstly defined as
\begin{equation}\label{SplitBeam17}
I_{Total}=\sum\limits_{k=1}^{N_{r}}\overrightarrow{I}_{k}=\sum\limits_{k=1}^{N_{r}}\bm{g}_{k}^{H}\sum_{j\neq k}^{N_{r}}\bm{H}_{k,j}\bm{f}_{j}\bm{f}_{j}^{H}\bm{H}_{k,j}^{H}\bm{g}_{k}
=\sum\limits_{k=1}^{N_{r}}\overleftarrow{I}_{k}=\sum\limits_{k=1}^{N_{r}}\bm{f}_{k}^{H}\sum_{j\neq k}^{N_{r}}\bm{H}_{j,k}^{H}\bm{g}_{j}\bm{g}_{j}^{H}\bm{H}_{j,k}\bm{f}_{k}.
\end{equation}

\begin{proposition}\label{SplitBeamPros01}
The sequences produced by Algorithm~\ref{SplitBeamRShifterOptimization}, Algorithm~\ref{SplitBeamTShifterOptimization} and Algorithm~\ref{SplitBeamCShifterOptimization} all are a monotonically non-increasing objective and always converge.
\end{proposition}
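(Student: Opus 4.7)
The plan is to prove the three monotonicity/convergence statements in sequence, reducing the later cases to the first. For Algorithm~\ref{SplitBeamRShifterOptimization}, the strategy is to view the inner loop over $l$ as cyclic coordinate descent on the quadratic form $\bm{g}_{k}^{H}\overrightarrow{\bm{H}}_{k}\bm{g}_{k}$ subject to the constant-modulus constraint $|\bm{g}_{k}(l)|=1/\sqrt{N_{t}}$. The decomposition~(\ref{SplitBeam12}) shows that, with all other entries of $\bm{g}_{k}$ held fixed, only the third (cross) term depends on $\bm{g}_{k}(l)$, and that term is $2\Re\{\bm{g}_{k}^{H}(l)\,\alpha_{l}\}$ with $\alpha_{l}=(\overrightarrow{\bm{H}}_{k}^{(-l)}(:,l))^{H}\bm{g}_{k}^{(-l)}$. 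Under the modulus constraint, the unique minimizer is precisely the anti-phase choice~(\ref{SplitBeam13}), so each elementary update cannot increase $\overrightarrow{I}_{k}$. Cycling $l=1,\ldots,N_{RF}$ therefore yields $\varrho_{\tau+1}\le\varrho_{\tau}$. Since $\overrightarrow{\bm{H}}_{k}\succeq 0$ the iterates are bounded below by $0$, and the monotone boundary theorem delivers convergence of $\{\varrho_{\tau}\}$.

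For Algorithm~\ref{SplitBeamTShifterOptimization}, the proof is the mirror image: the quadratic form $\bm{f}_{k}^{H}\overleftarrow{\bm{H}}_{k}\bm{f}_{k}$ admits the same decomposition~(\ref{SplitBeam15}), the update~(\ref{SplitBeam16}) solves the per-coordinate subproblem exactly, and $\overleftarrow{\bm{H}}_{k}\succeq 0$ again gives a non-negative lower bound. Monotone convergence of $\{\rho_{\tau}\}$ follows verbatim.

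For Algorithm~\ref{SplitBeamCShifterOptimization}, the key observation is the identity~(\ref{SplitBeam17}), which expresses the common total-interference functional $I_{\mathrm{Total}}$ in \emph{two} equivalent forms. When step~\ref{SplitBeamCShifterOptimization02} runs Algorithm~\ref{SplitBeamRShifterOptimization} on every receive sub-array $k$ (with all $\bm{f}_{j}$ frozen), each $\overrightarrow{I}_{k}$ is independently non-increased, hence $\sum_{k}\overrightarrow{I}_{k}=I_{\mathrm{Total}}$ is non-increased. When step~\ref{SplitBeamCShifterOptimization03} runs Algorithm~\ref{SplitBeamTShifterOptimization} on every transmit sub-array (with all $\bm{g}_{j}$ frozen), the second representation $\sum_{k}\overleftarrow{I}_{k}=I_{\mathrm{Total}}$ is non-increased by the symmetric argument. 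Thus the outer iterates generate a monotonically non-increasing sequence of values of the single scalar functional $I_{\mathrm{Total}}\ge 0$, which converges.

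The part that requires the most care is the per-coordinate minimization argument: one must verify that the update~(\ref{SplitBeam13}) is globally optimal on the circle $|\bm{g}_{k}(l)|=1/\sqrt{N_{t}}$, not merely a stationary point, and that the decomposition~(\ref{SplitBeam12}) really leaves the first two summands invariant when only one entry is changed. The rest is essentially bookkeeping: checking that the alternation across receive sub-arrays in step~\ref{SplitBeamCShifterOptimization02} is compatible (the subproblems for different $k$ decouple because $\bm{g}_{k}$ appears in only one term of $\sum_{k}\overrightarrow{I}_{k}$ once the $\bm{f}_{j}$ are held fixed), and similarly for step~\ref{SplitBeamCShifterOptimization03}. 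Convergence of the iterate sequences themselves (as opposed to the objective values) is not claimed and need not be established; the monotone boundary theorem on the scalar sequences is sufficient for Proposition~\ref{SplitBeamPros01} as stated.
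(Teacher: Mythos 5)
Your proposal is correct and follows essentially the same route as the paper: monotone non-increase of the (total) leakage interference under each per-coordinate and per-block update, boundedness below, and the monotonic boundary sequence theorem, with the two equivalent representations of $I_{Total}$ in~(\ref{SplitBeam17}) handling the alternation in Algorithm~\ref{SplitBeamCShifterOptimization}. If anything, your write-up is more careful than the paper's one-paragraph proof — the paper asserts the sequence is bounded below by ``the minimum singular value of $\overrightarrow{\bm{H}}_{k}$,'' whereas your lower bound of zero from positive semidefiniteness of $\overrightarrow{\bm{H}}_{k}$ and $\overleftarrow{\bm{H}}_{k}$ is the clean and correct statement.
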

\begin{proof}
For given $\bm{F}_{R}$, Algorithm~\ref{SplitBeamRShifterOptimization} aims to minimize the interference received by each receive sub-array $k$ via optimizing the receive phase shifter vector, i.e, each update of each receive sub-array $k$ phase shifter vector minimizes the objective of problem~(\ref{SplitBeam11}). It also means that the iterative procedure generates a non-increasing sequence which is lower bounded by the minimum singular value of $\overrightarrow{\bm{H}}_{k}$. Thus, the convergence of Algorithm~\ref{SplitBeamRShifterOptimization} is guaranteed by the monotonic boundary sequence theorem~\cite{Bibby1974}. Similarly, the conclusion also holds for Algorithm~\ref{SplitBeamTShifterOptimization}. Furthermore, it is easy to see that the total interference $I_{Total}$ sequence generated by Algorithm~\ref{SplitBeamCShifterOptimization} is also non-increasing sequence and the convergence of Algorithm~\ref{SplitBeamCShifterOptimization} is also guaranteed by the monotonic boundary sequence theorem~\cite{Bibby1974}.
\end{proof}

\subsection{Digital Precoder and Combiner Optimization}
Once the analog precoder $\bm{F}_{R}$ and combiner $\bm{G}_{R}$ are obtained, the channel coefficient between the transmitter and receiver becomes an equivalent $N_{r}\times N_{r}$ MIMO channel and can be rewritten as
\begin{equation}\label{SplitBeam18}
\widetilde{\bm{H}}=\bm{G}_{R}^{H}\bm{H}\bm{F}_{R}=\begin{bmatrix}
\bm{g}_{1}^{H}\bm{H}_{1,1}\bm{f}_{1}&\cdots
&\bm{g}_{1}^{H}\bm{H}_{1,N_{r}}\bm{f}_{N_{r}}\\
\vdots& \ddots&\vdots\\
\bm{g}_{N_{r}}^{H}\bm{H}_{N_{r},1}\bm{f}_{1}& \cdots&\bm{g}_{N_{r}}^{H}\bm{H}_{N_{r},N_{r}}\bm{f}_{N_{r}}\\
\end{bmatrix}.
\end{equation}
While the baseband received signal can be reformulated as $\widetilde{\bm{y}}=\widetilde{\bm{H}}\bm{F}_{B}\bm{s}+\widetilde{\bm{n}}$ where $\widetilde{\bm{n}}=\bm{G}_{R}^{H}\bm{n}$ is the effective additive noise with zero mean and variance $\bm{R}_{\widetilde{n}}=\sigma_{n}^{2}\bm{G}_{R}^{H}\bm{G}_{R}$. Further, the spectral efficiency given by~(\ref{SplitBeam06}) can also be rewritten as
\begin{subequations}\label{SplitBeam19}
\begin{align}
R&=\log_{2}\left(\left|\bm{I}+\bm{R}_{n}^{-1}\bm{G}_{B}^{H}\widetilde{\bm{H}}\bm{F}_{B}
\bm{F}_{B}^{H}\widetilde{\bm{H}}^{H}\bm{G}_{B}\right|\right)\label{SplitBeam19a}\\
&=\log_{2}\left(\left|\bm{I}+\bm{R}_{\widetilde{n}}^{-1}
\widetilde{\bm{H}}\bm{F}_{B}\bm{F}_{B}^{H}\widetilde{\bm{H}}^{H}\right|\right)\label{SplitBeam19b}.
\end{align}
\end{subequations}
We apply an $Nr\times Nr$ whitening filter $\widetilde{\bm{W}}$ at the receiver, which is
shown to be related with $\bm{R}_{\widetilde{n}}$ as $\widetilde{\bm{W}}=\bm{R}_{\widetilde{n}}^{-\frac{1}{2}}$~\cite{TWCShim2008,TITPalomar2006}. With this whitening filter, the baseband received signal after post-processing is $\overline{\bm{s}}=\overline{\bm{G}}_{B}^{H}\widetilde{\bm{W}}\widetilde{\bm{y}}=\overline{\bm{G}}_{B}^{H}
\widetilde{\bm{W}}\widetilde{\bm{H}}\bm{F}_{B}\bm{s}+\overline{\bm{G}}_{B}^{H}\widetilde{\bm{W}}\widetilde{\bm{n}}$. Under the independence assumption of $\bm{n}$'s and $\bm{s}$'s, the MSE matrix can be written as
\begin{equation}\label{SplitBeam20}
\begin{split}
\bm{E}=&\mathbb{E}\left[\left(\overline{\bm{s}}-\bm{s}\right)\left(\overline{\bm{s}}-\bm{s}\right)^{H}\right]
=\overline{\bm{G}}_{B}^{H}\widetilde{\bm{W}}\widetilde{\bm{H}}\bm{F}_{B}\bm{F}_{B}^{H}\widetilde{\bm{H}}^{H}\widetilde{\bm{W}}^{H}
\overline{\bm{G}}_{B}\\
&+\overline{\bm{G}}_{B}^{H}\widetilde{\bm{W}}\bm{R}_{\widetilde{n}}\widetilde{\bm{W}}^{H}\overline{\bm{G}}_{B}
-\overline{\bm{G}}_{B}^{H}\widetilde{\bm{W}}\widetilde{\bm{H}}\bm{F}_{B}-\bm{F}_{B}^{H}\widetilde{\bm{H}}^{H}
\widetilde{\bm{W}}^{H}\overline{\bm{G}}_{B}+\bm{I}.
\end{split}
\end{equation}
Combining~(\ref{SplitBeam05}) with~(\ref{SplitBeam20}), we have $\bm{G}_{B}=\widetilde{\bm{W}}^{H}\overline{\bm{G}}_{B}$. Fixing all the transmit precoder $\bm{F}_{B}$ and minimizing MSE lead to the well-known MMSE receiver:
\begin{equation}\label{SplitBeam21}
\overline{\bm{G}}_{B}^{mmse}=\left(\widetilde{\bm{W}}\widetilde{\bm{H}}\bm{F}_{B}\bm{F}_{B}^{H}\widetilde{\bm{H}}^{H}\widetilde{\bm{W}}^{H}+
\bm{I}\right)^{-1}\widetilde{\bm{W}}\widetilde{\bm{H}}\bm{F}_{B},
\end{equation}
and
\begin{equation}\label{SplitBeam22}
\bm{G}_{B}^{mmse}=\widetilde{\bm{W}}^{H}\overline{\bm{G}}_{B}^{mmse}=\left(\widetilde{\bm{H}}\bm{F}_{B}\bm{F}_{B}^{H}\widetilde{\bm{H}}^{H}+
\bm{R}_{\widetilde{n}}\right)^{-1}\widetilde{\bm{H}}\bm{F}_{B}.
\end{equation}
The corresponding MSE matrix with MMSE combiner is applied can be written as
\begin{equation}\label{SplitBeam23}
\begin{split}
\bm{E}^{mmse}=&\bm{I}-\bm{F}_{B}^{H}\widetilde{\bm{H}}^{H}\widetilde{\bm{W}}^{H}\left(\widetilde{\bm{W}}\widetilde{\bm{H}}
\bm{F}_{B}\bm{F}_{B}^{H}\widetilde{\bm{H}}^{H}\widetilde{\bm{W}}^{H}+
\bm{I}\right)^{-1}\widetilde{\bm{W}}\widetilde{\bm{H}}\bm{F}_{B}\\
=&\left(\bm{I}+\bm{F}_{B}^{H}\widetilde{\bm{H}}^{H}\widetilde{\bm{W}}^{H}\widetilde{\bm{W}}\widetilde{\bm{H}}\bm{F}_{B}\right)^{-1}
=\left(\bm{I}+\bm{F}_{B}^{H}\widetilde{\bm{H}}^{H}\bm{R}_{\widetilde{\bm{n}}}^{-1}\widetilde{\bm{H}}\bm{F}_{B}\right)^{-1}.
\end{split}
\end{equation}
From (\ref{SplitBeam19b}) and (\ref{SplitBeam23}), it is easy to see that $\log_{2}\left(\left|\left(\bm{E}^{mmse}\right)^{-1}\right|\right)=R$. Now, we resort to solve the following problem~(\ref{SplitBeam24}) to obtain the baseband precoder $\bm{F}_{B}$ and the baseband combiner $\bm{G}_{B}$.
\begin{equation}\label{SplitBeam24}
\max_{\bm{F}_{B},\bm{G}_{B},\bm{W}}\frac{-tr\left(\bm{W}\bm{E}\right)+\ln\left(\left|\bm{W}\right|\right)+N_{r}}{P_{con}}\\
~s.t.~\widetilde{N}\left\|\bm{F}_{B}\right\|_{\mathcal{F}}^{2}\leq P.
\end{equation}
where $\bm{W}$ is a positive semidefinite weight matrix and $\widetilde{N}=\frac{N_{RF}}{N_{t}}$.

Note that the numerator is concave in each of the optimization variables $\bm{F}_{B}$, $\bm{G}_{B}$, and $\bm{W}$, the denominator is convex with respect to $\bm{F}_{B}$. Therefore, it belongs to the class of the standard fractional programming problem. The research results in~\cite{TSPPal2003,JSACPal2006,TCOMHE2013,TSPHE2014} have shown that the fractional programming problem can be transformed into a parameterized subtractive form by introducing an auxiliary variable, i.e.
\begin{equation}\label{SplitBeam25}
\max_{\bm{F}_{B},\bm{G}_{B},\bm{W}}-tr\left(\bm{W}\bm{E}\right)+\ln\left(\left|\bm{W}\right|\right)+N_{r}-\varpi P_{con}\\
~s.t.~\widetilde{N}\left\|\bm{F}_{B}\right\|_{\mathcal{F}}^{2}\leq P.
\end{equation}
The existing research on the fractional programming problem have shown that solving the above problem is equivalent to look up a solution of problem~(\ref{SplitBeam25}) such that its objective equals zero and the optimal solution of $\varpi$ can be obtained by using Newton-like method. In what follows, we focus on solve problem~(\ref{SplitBeam25}) with fixed $\varpi$ and propose to use the block coordinate descent method to solve it.  Specially, we maximize the cost function by sequentially fixing two of the three variables $\bm{F}_{B}$, $\bm{G}_{B}$, and $\bm{W}$, and updating the third. It is easily known that the optimal solutions of $\bm{G}_{B}$ and $\bm{W}$ are respectively given by~(\ref{SplitBeam22}) and $\bm{W}^{opt}=\left(\bm{E}^{mmse}\right)^{-1}$~\cite{TWCChri2008}.

For fixed $\bm{G}_{B}$ and $\bm{W}$, the Lagrange function of problem~(\ref{SplitBeam25}) is given as follows
\begin{equation}\label{SplitBeam26}
\begin{split}
\mathcal{L}\left(\bm{F}_{B},\mu\right)=&-tr\left(\bm{F}_{B}^{H}\widetilde{\bm{H}}^{H}\bm{G}_{B}\bm{W}\bm{G}_{B}^{H}\widetilde{\bm{H}}\bm{F}_{B}\right)
-tr\left(\bm{W}\bm{G}_{B}^{H}\bm{R}_{\widetilde{\bm{n}}}\bm{G}_{B}
-\bm{W}\bm{G}_{B}^{H}\widetilde{\bm{H}}\bm{F}_{B}\right)\\
&-tr\left(-\bm{F}_{B}^{H}\widetilde{\bm{H}}^{H}\bm{G}_{B}\bm{W}+\bm{W}\right)
+\ln\left(\left|\bm{W}\right|\right)+N_{r}\\
&-\varpi\eta\widetilde{N}\left\|\bm{F}_{B}\right\|_{\mathcal{F}}^{2}-\varpi\left(P_{T}+P_{R}\right)-
\mu\left(\widetilde{N}tr\left(\bm{F}_{B}^{H}\bm{F}_{B}\right)-P\right),
\end{split}
\end{equation}
where $\mu$ is a Lagrange multiplier associated with the power budget constraint of transmitter. The first-order optimality condition of $\mathcal{L}\left(\bm{F}_{B},\mu\right)$ with respect to $\bm{F}_{B}$ yields
\begin{equation}\label{SplitBeam27}
\bm{F}_{B}^{opt}=\left(\widetilde{\bm{H}}^{H}\bm{G}_{B}\bm{W}\bm{G}_{B}^{H}\widetilde{\bm{H}}+\widetilde{\mu}\bm{I}\right)^{-1}
\widetilde{\bm{H}}^{H}\bm{G}_{B}\bm{W},
\end{equation}
where $\widetilde{\mu}=\left(\varpi\eta+\mu\right)\widetilde{N}$ and $\mu\geq 0$ should be chosen such that the complementarity slackness condition of the power budget constraint is satisfied. Let $\bm{F}_{B}\left(\widetilde{\mu}\right)$ denote the right hand of~(\ref{SplitBeam27}). When the matrix $\widetilde{\bm{H}}^{H}\bm{G}_{B}\bm{W}\bm{G}_{B}^{H}\widetilde{\bm{H}}$ is invertible and $\widetilde{N}\left|\bm{F}_{B}\left(\varpi\eta\widetilde{N}\right)\right|_{\mathcal{F}}^{2}\leq P$, then $\bm{F}_{B}^{opt}=F_{B}\left(\varpi\eta\widetilde{N}\right)$, otherwise the equality $\widetilde{N}tr\left(\bm{F}_{B}\left(\widetilde{\mu}\right)\bm{F}_{B}\left(\widetilde{\mu}\right)^{H}\right)$ must be held. Let the eigendecomposition of $\widetilde{\bm{H}}^{H}\bm{G}_{B}\bm{W}\bm{G}_{B}^{H}\widetilde{\bm{H}}$ be $\bm{\Omega}\bm{\Lambda}\bm{\Omega}^{H}$, then $\widetilde{N}tr\left(\bm{F}_{B}\left(\widetilde{\mu}\right)\bm{F}_{B}\left(\widetilde{\mu}\right)^{H}\right)$ can be equivalent to
\begin{equation}\label{SplitBeam28}
\widetilde{N}\sum\limits_{m=1}^{N_{r}}\frac{\bm{\Phi}\left({m,m}\right)}
{\left(\bm{\Lambda}\left({m,m}\right)+\widetilde{\mu}\right)^{2}}=P,
\end{equation}
where $\bm{\Phi}=\bm{\Omega}^{H}\widetilde{\bm{H}}^{H}\bm{G}_{B}\bm{W}\bm{W}^{H}\bm{G}_{B}^{H}\widetilde{\bm{H}}\bm{\Omega}$. Note that the optimum $\widetilde{\mu}$ (denoted by $\widetilde{\mu}^{*}$) must be positive in this case and the left-hand side of (\ref{SplitBeam28}) is a decreasing function in $\widetilde{\mu}$ for $\widetilde{\mu}\geq \varpi\eta$ and $\widetilde{\mu}\leqslant \sqrt{\frac{\widetilde{N}}{P}\sum\limits_{m=1}^{N_{r}}\bm{\Phi}\left(m,m\right)}$. Hence, (\ref{SplitBeam28}) can be easily solved using one dimensional search techniques, such as bisection method. Plugging $\widetilde{\mu}^{*}$, the optimal $\bm{F}_{B}^{opt}$ can be obtained.

Thus, a two layer algorithm used to design the digital precoder $\bm{F}_{B}$ and the digital combiner $\bm{G}_{B}$ is summarized as Algorithm~\ref{SplitBeamDigitalBeam} to solve problem~(\ref{SplitBeam24}) where $\varepsilon$ is a stop threshold and $\chi$ denotes the objective of problem~(\ref{SplitBeam25}). At the outer layer of Algorithm~\ref{SplitBeamDigitalBeam}, the auxiliary variable $\varpi$ is optimized via the fractional programming method~\cite{JstorJagan1966,JstorDink1967}. At the inner layer of Algorithm~\ref{SplitBeamDigitalBeam}, due to that the cost function of~\eqref{SplitBeam25} is convex in each of the optimization variables $\bm{F}_{B}$, $\bm{G}_{B}$, and $\bm{W}$, we propose to use the block coordinate descent method to solve~\eqref{SplitBeam25} via alternative iteration method. In other words, \eqref{SplitBeam25} is optimized sequentially via fixing two of the three variables $\bm{F}_{B}$, $\bm{G}_{B}$, and $\bm{W}$, and updating the third.
\begin{algorithm}
\caption{Digital Precoder and Combiner Solution}\label{SplitBeamDigitalBeam}
\begin{algorithmic}[1]
\STATE Initialize $\varpi=0$.
\STATE Let $\chi=0$ , and initialize $\bm{F}_{B}$ such that $\widetilde{N}tr\left(\bm{F}_{B}\bm{F}_{B}^{H}\right)=P$ and $\bm{W}=\bm{I}$.\label{SplitBeamDigitalBeamUpdate}
\REPEAT
\STATE $\bm{W}'\leftarrow\bm{W}$, $\chi'\leftarrow\chi$
\STATE $\bm{G}_{B}\leftarrow\left(\widetilde{\bm{H}}\bm{F}_{B}\bm{F}_{B}^{H}\widetilde{\bm{H}}^{H}+
\bm{R}_{\widetilde{n}}\right)^{-1}\widetilde{\bm{H}}\bm{F}_{B}$
\STATE $\bm{W}\leftarrow\left(\bm{I}-\bm{G}_{B}^{H}\widetilde{\bm{H}}\bm{F}_{B}\right)^{-1}$
\STATE $\bm{F}_{B}\leftarrow\left(\widetilde{\bm{H}}^{H}\bm{G}_{B}\bm{W}\bm{G}_{B}^{H}
\widetilde{\bm{H}}+\widetilde{\mu}^{*}\bm{I}\right)^{-1}\widetilde{\bm{H}}^{H}\bm{G}_{B}\bm{W}$
\STATE Update the objective of problem~(\ref{SplitBeam25})  and then obtain $\chi$
\UNTIL{$\left|\chi-\chi'\right|\leq\varepsilon$}
\STATE If $\left|\chi\right|\leq\varepsilon$, then output $\bm{G}_{B}$, $\bm{W}$, $\bm{F}_{B}$ and stop iteraion, otherwise let $\varpi\leftarrow \frac{-tr\left(\bm{W}\bm{E}\right)+\ln\left(\left|\bm{W}\right|\right)+N_{r}}{P_{con}}$ and to step~\ref{SplitBeamDigitalBeamUpdate}.
\end{algorithmic}
\end{algorithm}

Note that Algorithm~\ref{SplitBeamDigitalBeam} aims to optimize the system energy efficiency and can be used to optimize the spectral efficiency by omitting the outer loop. Furthermore, Algorithm~\ref{SplitBeamDigitalBeam} can also be applied to optimize the digital precoder and combiner without the phase shifter constraints. According to the monotonic boundary theorem and fractional programming theory, it is easy to prove the following proposition~\cite{JstorJagan1966,JstorDink1967,Bibby1974}. The computational complexity analysis is similar with that of the developed algorithms in~\cite{TCOMHE2013,TSPHE2014}.
\begin{proposition}\label{SplitBeamPros02}
The sequence produced by Algorithm~\ref{SplitBeamDigitalBeam} is a monotonically non-decreasing objective and always converge.
\end{proposition}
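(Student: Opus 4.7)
The plan is to exploit the two-layer structure of Algorithm~\ref{SplitBeamDigitalBeam} by arguing convergence of the inner block-coordinate-descent (BCD) loop and the outer Dinkelbach-type loop separately, then glue them together. For fixed $\varpi$, I would show that the three block updates of $\bm{G}_{B}$, $\bm{W}$, $\bm{F}_{B}$ each monotonically increase the objective of the parametric subtractive problem~(\ref{SplitBeam25}) and that this objective is bounded above, so the inner sequence converges by the monotonic boundary theorem~\cite{Bibby1974}. Once the inner loop has converged, the outer update $\varpi\leftarrow R/P_{con}$ is the classical Dinkelbach iteration, whose monotonic convergence is standard fractional-programming theory~\cite{JstorJagan1966,JstorDink1967}.

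For the inner-loop step, I would verify that each block update is an \emph{exact} maximizer of~(\ref{SplitBeam25}) in the corresponding variable with the other two held fixed. The $\bm{G}_{B}$ update is exactly the MMSE combiner~(\ref{SplitBeam22}), which minimizes $tr(\bm{W}\bm{E})$ (and thus maximizes the objective) for any positive semidefinite weight $\bm{W}$. The $\bm{W}$ update follows from the unconstrained optimality condition of $-tr(\bm{W}\bm{E})+\ln|\bm{W}|$, giving $\bm{W}^{opt}=\bm{E}^{-1}$, which with the MMSE combiner reduces to $(\bm{E}^{mmse})^{-1}$ and, via the rate-MMSE identity $\log_{2}|(\bm{E}^{mmse})^{-1}|=R$ noted after~(\ref{SplitBeam23}), restores the rate at the plug-in. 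The $\bm{F}_{B}$ update is obtained from the Lagrangian~(\ref{SplitBeam26}) whose first-order condition yields~(\ref{SplitBeam27}) and whose Lagrange multiplier is fixed by the power-budget complementarity~(\ref{SplitBeam28}), where the strict monotonicity of the left-hand side in $\widetilde{\mu}$ guarantees a unique bisection solution. Because each block maximization can only increase the objective of~(\ref{SplitBeam25}), and because the numerator $R=\log_{2}|(\bm{E}^{mmse})^{-1}|$ is bounded above for finite transmit power while $P_{con}\ge P_{T}+P_{R}>0$, the inner sequence is monotone and bounded, hence convergent.

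For the outer loop, I would introduce $F(\varpi)=\max -tr(\bm{W}\bm{E})+\ln|\bm{W}|+N_{r}-\varpi P_{con}$ subject to the power constraint in~(\ref{SplitBeam25}). Standard results in~\cite{JstorJagan1966,JstorDink1967} tell us that $F$ is continuous, strictly decreasing and convex in $\varpi$, that $F(\varpi)=0$ admits a unique root $\varpi^{*}$, and that $\varpi^{*}$ equals the optimal value of the original fractional problem~(\ref{SplitBeam24}). The update rule $\varpi\leftarrow R/P_{con}$ evaluated at the converged inner-loop iterate is precisely Dinkelbach's parameter update, which generates a monotonically non-decreasing sequence $\{\varpi^{(t)}\}$ converging (super-linearly) to $\varpi^{*}$; combined with the inner-loop monotonicity, this yields the claimed non-decreasing energy-efficiency sequence produced by Algorithm~\ref{SplitBeamDigitalBeam}.

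The main obstacle I anticipate is the $\bm{F}_{B}$ subproblem: the first-order condition~(\ref{SplitBeam27}) is the stationary point of a quadratic-plus-linear objective, but global optimality under the power constraint requires a careful KKT argument, including the possibility that $\widetilde{\bm{H}}^{H}\bm{G}_{B}\bm{W}\bm{G}_{B}^{H}\widetilde{\bm{H}}$ is rank-deficient. I would handle this by noting that the unconstrained quadratic is concave (since $\bm{W}\succeq 0$) and by using the eigendecomposition $\bm{\Omega}\bm{\Lambda}\bm{\Omega}^{H}$ introduced before~(\ref{SplitBeam28}) to certify that the bisection operates on a strictly decreasing, continuous function on $[\varpi\eta\widetilde{N},\sqrt{\widetilde{N}\sum_{m}\bm{\Phi}(m,m)/P}]$ and hence delivers a feasible optimal $\widetilde{\mu}^{*}$. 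Once this technicality is dispatched, the rest of the argument is the standard composition of monotone BCD with Dinkelbach's algorithm.
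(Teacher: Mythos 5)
Your proposal is correct and follows essentially the same route as the paper, whose ``proof'' is merely a pointer to the WMMSE/BCD-plus-Dinkelbach argument of~\cite{TCOMHE2013,TSPHE2014} that you have written out in full: exact block maximizers for $\bm{G}_{B}$, $\bm{W}$, $\bm{F}_{B}$ give a monotone, bounded inner sequence, and the parameter update $\varpi\leftarrow R/P_{con}$ gives a non-decreasing outer sequence. The only caution is that your appeal to classical Dinkelbach theory (unique root $\varpi^{*}$, superlinear convergence to the global optimum) presumes the inner parametric problem~(\ref{SplitBeam25}) is solved \emph{globally}, whereas block coordinate ascent on this jointly non-convex objective only guarantees a stationary point --- so one should only claim monotone non-decrease and convergence of the objective value (which requires warm-starting the inner loop at the previous iterate so that its objective starts at zero, rather than the re-initialization written in Step~2 of Algorithm~\ref{SplitBeamDigitalBeam}), which is all Proposition~\ref{SplitBeamPros02} actually asserts.
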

\begin{proof}
Similar to the proof of the convergence of the algorithms developed in~\cite{TCOMHE2013,TSPHE2014}, we can easily conclude that the convergence of Algorithm~\ref{SplitBeamDigitalBeam} is guaranteed.
\end{proof}

Based on the aforementioned design method of the hybrid precoder and combiner, we can obtain a solution to problem \eqref{SplitBeam10} via alternative optimization method. First, the RF precoder $\bm{F}_{R}$ and the RF combiner $\bm{G}_{R}$ are designed using Algorithm~\ref{SplitBeamCShifterOptimization} aiming to minimize the total interference. Then, the baseband precoder $\bm{F}_{B}$ and the baseband combiner $\bm{G}_{B}$ are designed using Algorithm~\ref{SplitBeamDigitalBeam} aiming to maximize the system energy efficiency.

\section{Numerical Results}

In this paper, we adopt a narrowband clustered channel representation in~\cite{TWCAyach2014}. It is based on the extended Saleh-Valenzuela model~\cite{BookClerkx2013} which has been used for modeling a 60-GHz wireless local area network~\cite{OnlineMat2011} and personal area network~\cite{OnlineYong2010}. This model allows us to accurately capture the mathematical structure presented in mmWave channels. The channel matrix $\bm{H}$ is assumed to be a sum of the contributions of $N_{cl}$ scattering clusters, each of which contribute $N_{ray}$ propagation paths to $\bm{H}$. Therefore, the $\bm{H}$ can be written as
\begin{equation}\label{SplitBeam07}
\bm{H}=\frac{N_{t}}{\sqrt{N_{cl}N_{ray}}}\sum\limits_{m=1}^{N_{cl}}\sum\limits_{n=1}^{N_{ray}}\alpha_{m,n}
\bm{a}_{r}\left(\phi_{m,n}^{r},\theta_{m,n}^{r}\right)\bm{a}_{t}\left(\phi_{m,n}^{t},\theta_{m,n}^{t}\right)^{H},
\end{equation}
where $\alpha_{m,n}$ is the complex gain of the $n$th ray in the $m$th scattering cluster and is complex Gaussian random variable with zero mean and variance $\sigma_{\alpha}^{2}$, whereas $\phi_{m,n}^{r}\left(\theta_{m,n}^{r}\right)$ and $\phi_{m,n}^{t}\left(\theta_{m,n}^{t}\right)$ are its azimuth (elevation) angles of arrival and departure (AoA and AoD), respectively. The mean angle associated with each cluster is uniformly distributed over $\left[-\pi, \pi\right)$, and the distribution of the difference between an AoA (AoD) and its mean is Laplacian with angular standard deviation $\sigma_{AS}$~\cite{TVTForenza2007,JSACXu2002,TNSingh2011}. The vector $\bm{a}_{r}\left(\phi_{m,n}^{r},\theta_{m,n}^{r}\right)$ and $\bm{a}_{t}\left(\phi_{m,n}^{t},\theta_{m,n}^{t}\right)$ are the normalized receive and transmit array response vectors at an azimuth (elevation) angle of $\phi_{m,n}^{r}\left(\theta_{m,n}^{r}\right)$ and $\phi_{m,n}^{t}\left(\theta_{m,n}^{t}\right)$, respectively~\cite{TWCAyach2014,JSTSPAlk2014,TWCAlk2015,TSPLee2015,TVTKim2015}.

In our simulations, the propagation environment is model as a $N_{cl} = 8$ cluster environment with $N_{ray} = 10$ rays per cluster with Laplacian distributed azimuth and elevation angles of arrival and departure~\cite{TVTForenza2007,JSACXu2002}. For simplicity of exposition, The inter-element spacing $d$ is assumed to be half wavelength. We compare the performance of the proposed strategy to optimal unconstrained precoding in which a complete antenna array with one RF chain per antenna and the power consumption is
\begin{subequations}\label{SplitBeam29}
\begin{align}
&P_{D}=\eta\left\|\bm{F}_{B}\right\|_{\mathcal{F}}^{2}+P_{DT}+P_{DR},\\
&P_{DT}=N_{t}\left(P_{tRFC}+P_{DAC}+P_{PA}\right)+P_{BB},\\
&P_{DR}=N_{t}\left(P_{rRFC}+P_{ADC}+P_{LNA}\right)+P_{BB}.
\end{align}
\end{subequations}
In our simulations, we set $P_{PA}=P_{LNA} = 20$ mW, $P_{DAC}=P_{ADC} = 200$ mW, $P_{PS} = 30$ mW, $P_{tRFC}=P_{rRFC}= 43$ mW and $P_{BB}=300$ mW~\cite{CTWRangan2013,IJSSCYu2010,TMTTLi2013}. For simpicity, the value of the inefficiency of the power amplifier is set to be unit. We assume uniform linear arrays with antenna spacing of $d=\frac{\lambda}{2}$. $\sigma_{n}^{2} =0$ dBm. The variance of the channel
gains $\sigma_{\alpha}^{2}=1$, and the angular standard deviation (angular spread)
 $\sigma_{AS}= 5$~\cite{TWCAyach2014,JSTSPAlk2014,TWCAlk2015,TSPLee2015,TVTKim2015}.

Fig.~\ref{Alg1ConvergenceTrajectory}, Fig.~\ref{Alg2ConvergenceTrajectory}, and Fig.~\ref{Alg3ConvergenceTrajectory} show the convergence trajectory of Algorithm~\ref{SplitBeamRShifterOptimization}, Algorithm~\ref{SplitBeamTShifterOptimization}, and Algorithm~\ref{SplitBeamCShifterOptimization} for four random channel realizations, respectively. It is illustrated from these figures that the three algorithms converge to a stationary point within $4-5$ iterations. In other words, the proposed algorithms have a fast convergence speed.
\begin{figure}[h]
\centering
\captionstyle{flushleft}
\onelinecaptionstrue
\includegraphics[width=0.8\columnwidth,keepaspectratio]{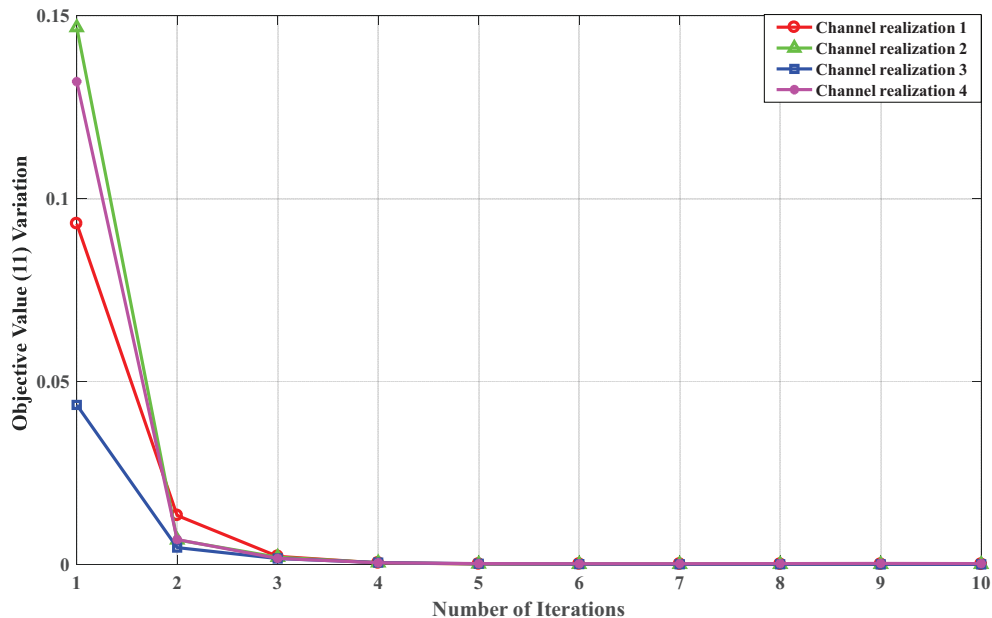}\\
\caption{Convergence Trajectory of Algorithm~\ref{SplitBeamRShifterOptimization}, $N_{t}=64$, $N_{RF}=8$, $N_{r}=8$, $P=10$dBm.}
\label{Alg1ConvergenceTrajectory}
\end{figure}

\begin{figure}[h]
\centering
\captionstyle{flushleft}
\onelinecaptionstrue
\includegraphics[width=0.8\columnwidth,keepaspectratio]{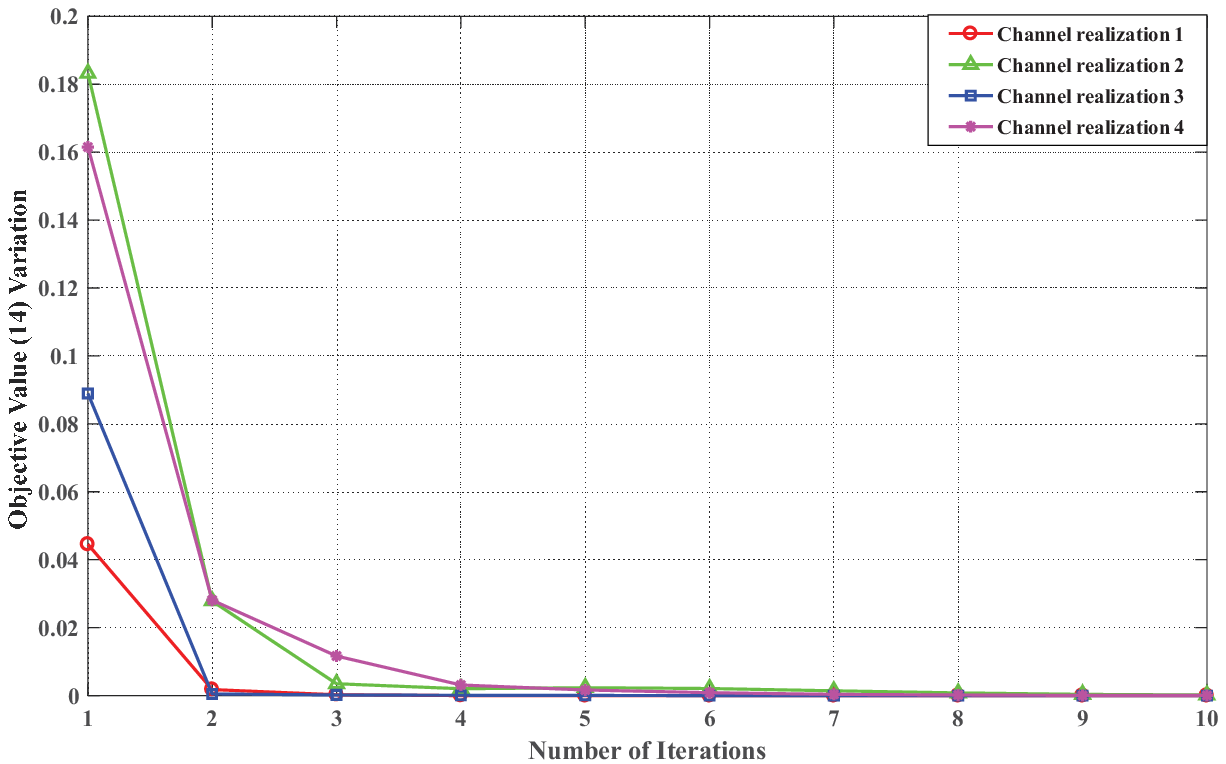}\\
\caption{Convergence Trajectory of Algorithm~\ref{SplitBeamTShifterOptimization}, $N_{t}=64$, $N_{RF}=8$, $N_{r}=8$, $P=10$dBm.}
\label{Alg2ConvergenceTrajectory}
\end{figure}

\begin{figure}[h]
\centering
\captionstyle{flushleft}
\onelinecaptionstrue
\includegraphics[width=0.8\columnwidth,keepaspectratio]{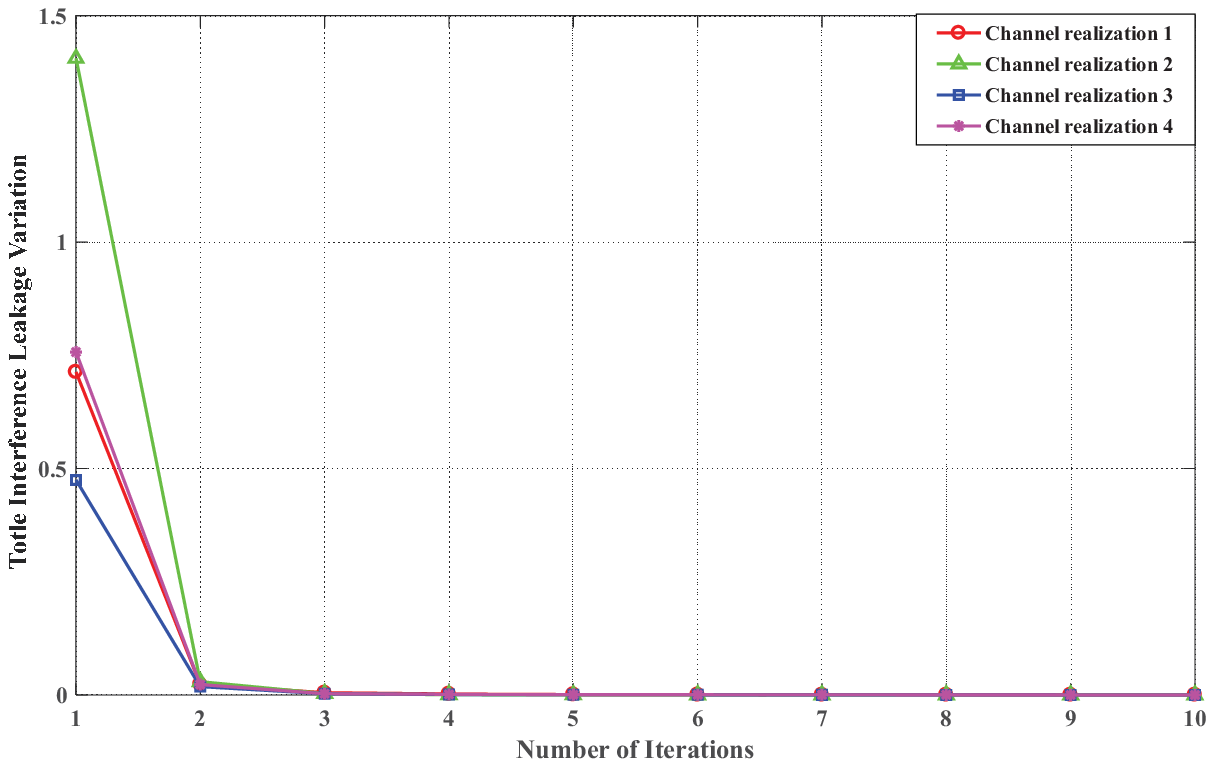}\\
\caption{Convergence Trajectory of Algorithm~\ref{SplitBeamCShifterOptimization}, $N_{t}=64$, $N_{RF}=8$, $N_{r}=8$, $P=10$dBm.}
\label{Alg3ConvergenceTrajectory}
\end{figure}

Fig.~\ref{Alg4ConvergenceTrajectory} shows the convergence behavior of Algorithm~\ref{SplitBeamDigitalBeam} under different allowable transmit power constraint for four random channel realizations. Numerical results also show that Algorithm~\ref{SplitBeamDigitalBeam} can converge to a stationary point within $4-10$ iterations.
\begin{figure}
\centering
\captionstyle{flushleft}
\onelinecaptionstrue
\subfigure[ $P=10$dBm]{
\label{Alg4ConvergenceTrajectory01}
\includegraphics[width=0.8\columnwidth]{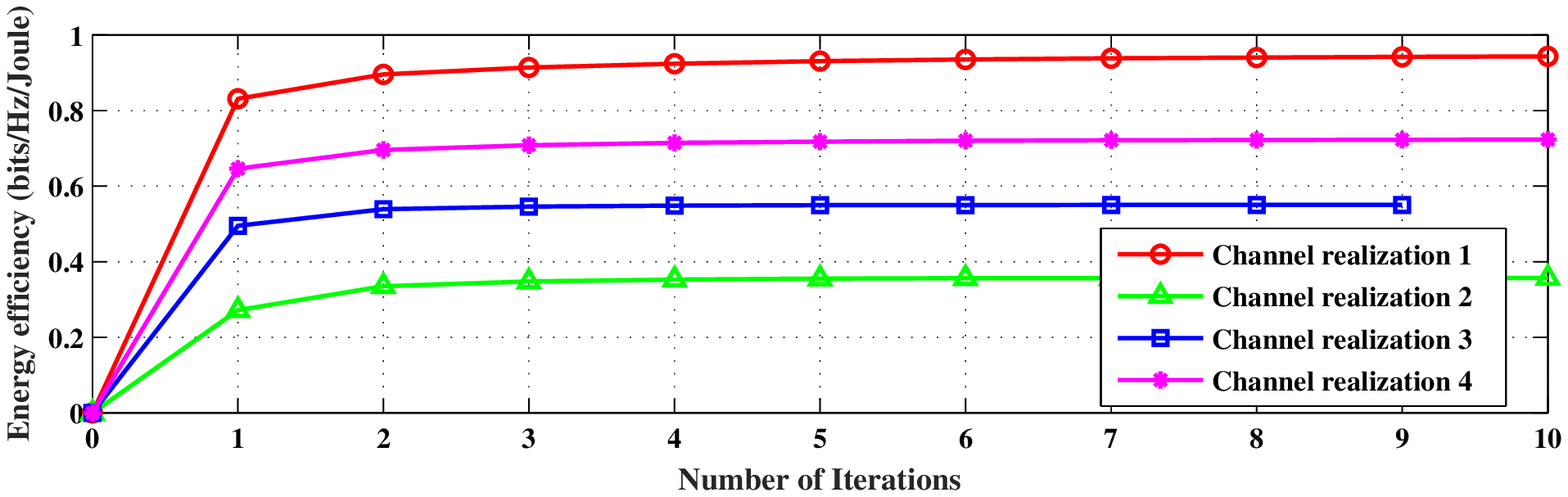}}
\subfigure[ $P=30$dBm]{
\label{Alg4ConvergenceTrajectory02}
\includegraphics[width=0.8\columnwidth]{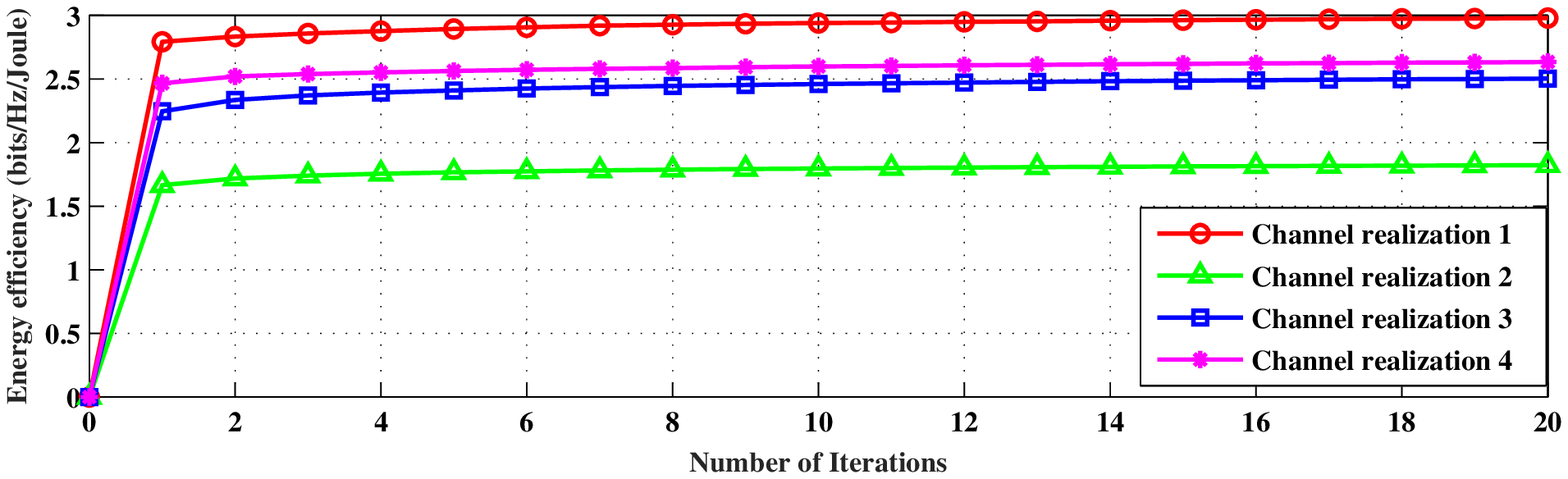}}
\caption{Convergence Trajectory of Algorithm~\ref{SplitBeamDigitalBeam}, $N_{t}=64$, $N_{RF}=8$, $N_{r}=8$.}
\label{Alg4ConvergenceTrajectory}
\end{figure}

Fig.~\ref{EnergyEfficiencyComparison1} and Fig.~\ref{EnergyEfficiencyComparison2} show the average energy efficiency performance of proposed hybrid precoder and combiner scheme compared with digital precoder and combiner scheme under different antenna configurations and different power consumptions of  $P_{tRFC}=P_{rRFC}$ over $1000$ random channel realizations. Numerical results show that the digital precoder as well as combiner scheme outperforms the hybrid precoder and combiner scheme with $N_{t}$ RF chains. In other words, digital precoder and combiner scheme achieves better energy efficiency performance at the cost of higher hardware implementation and more computational complexity and channel estimation and feedback overheads. We can observe that at lower transmit power region, the hybrid precoder and combiner scheme outperforms the digital precoder and combiner schemes in the case of larger number of antennas and larger value of $P_{tRFC}=P_{rRFC}$.
\begin{figure}
\centering
\captionstyle{flushleft}
\onelinecaptionstrue
\subfigure[$P_{tRFC}=P_{rRFC}= 43$ mW]{
\label{EnergyEfficiencyComparisonI}
\includegraphics[width=0.8\columnwidth]{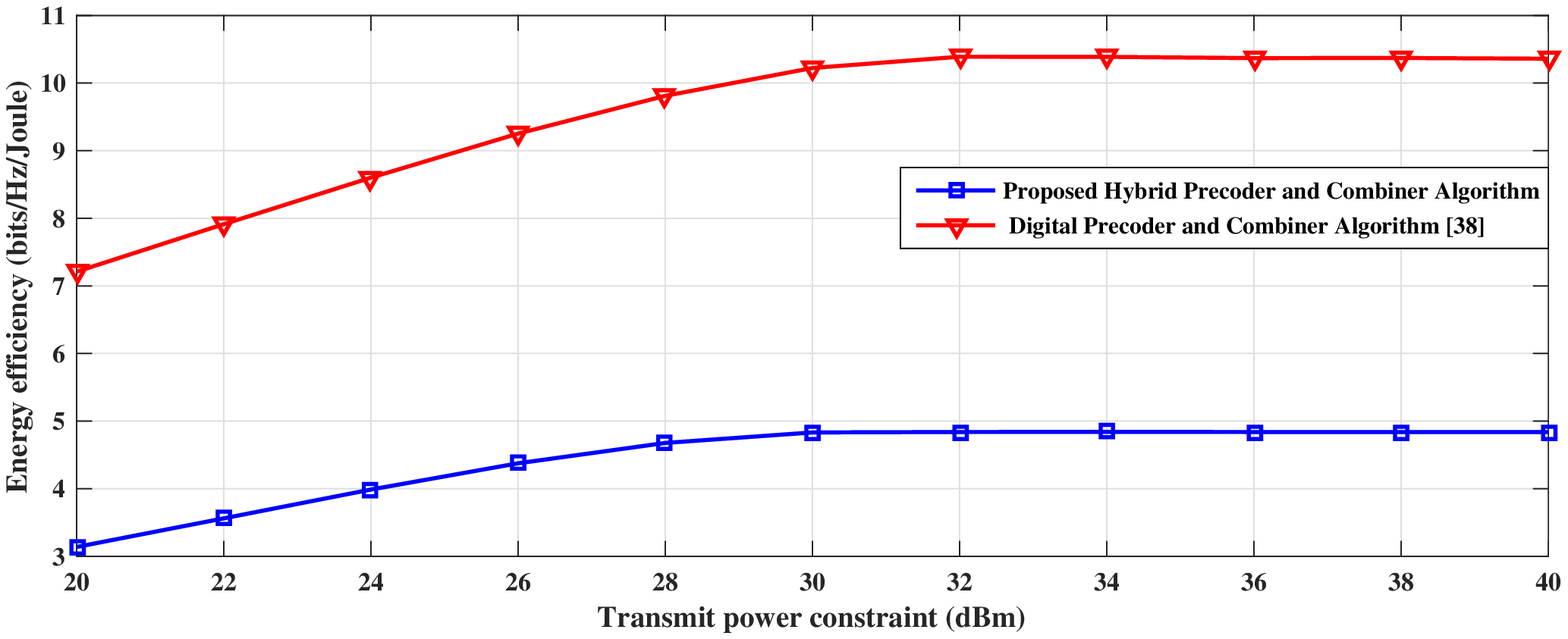}}
\subfigure[$P_{tRFC}=P_{rRFC}= 430$ mW]{
\label{EnergyEfficiencyComparisonII}
\includegraphics[width=0.8\columnwidth]{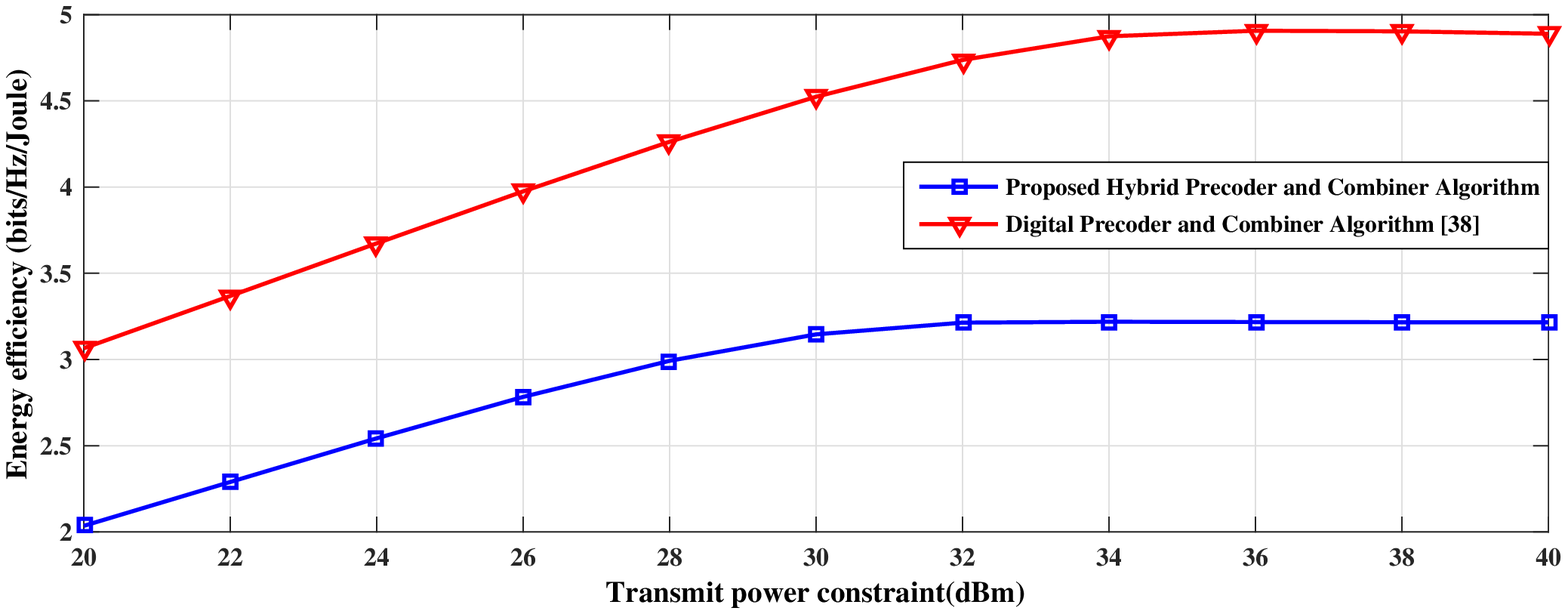}}
\caption{Energy efficiency performance comparison, $N_{t}=16$, $N_{RF}=4$, $N_{r}=4$, $\varepsilon=10^{-4}$.}
\label{EnergyEfficiencyComparison1}
\end{figure}

\begin{figure}
\centering
\captionstyle{flushleft}
\onelinecaptionstrue
\subfigure[$P_{tRFC}=P_{rRFC}= 43$ mW]{
\label{EnergyEfficiencyComparisonIII}
\includegraphics[width=0.8\columnwidth]{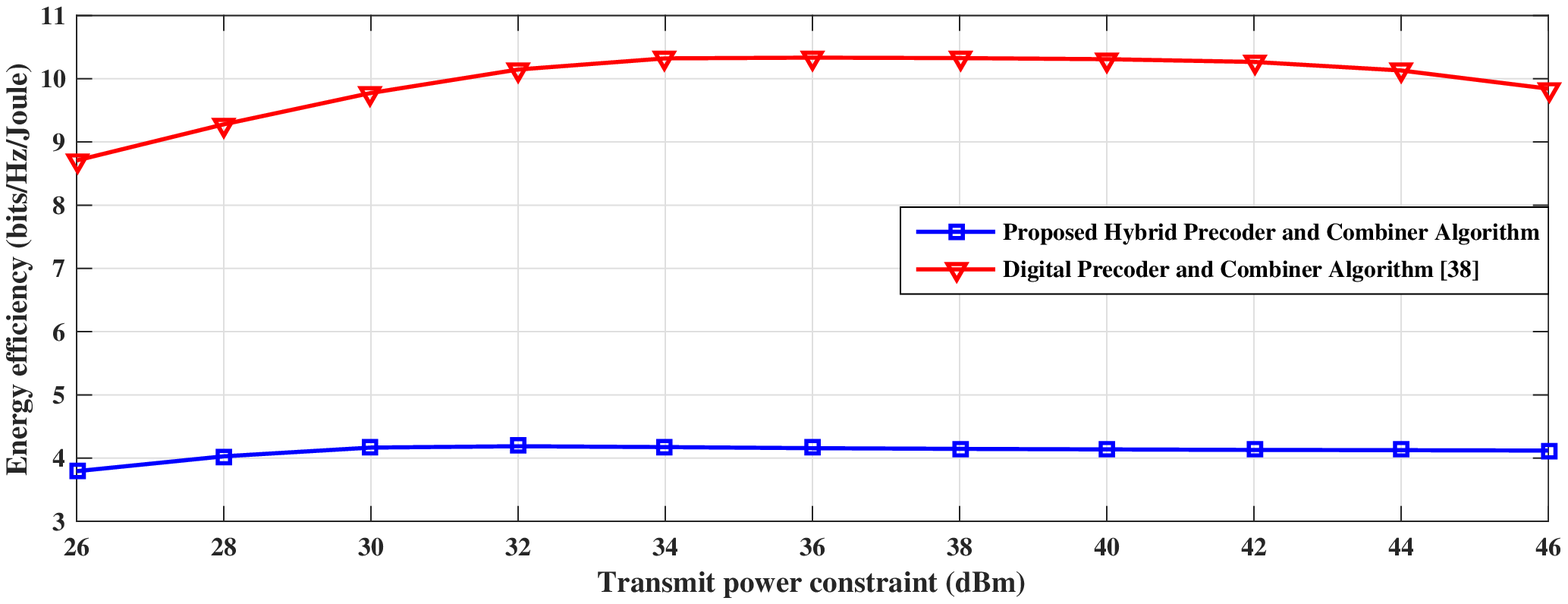}}
\subfigure[$P_{tRFC}=P_{rRFC}= 430$ mW]{
\label{EnergyEfficiencyComparisonIV}
\includegraphics[width=0.8\columnwidth]{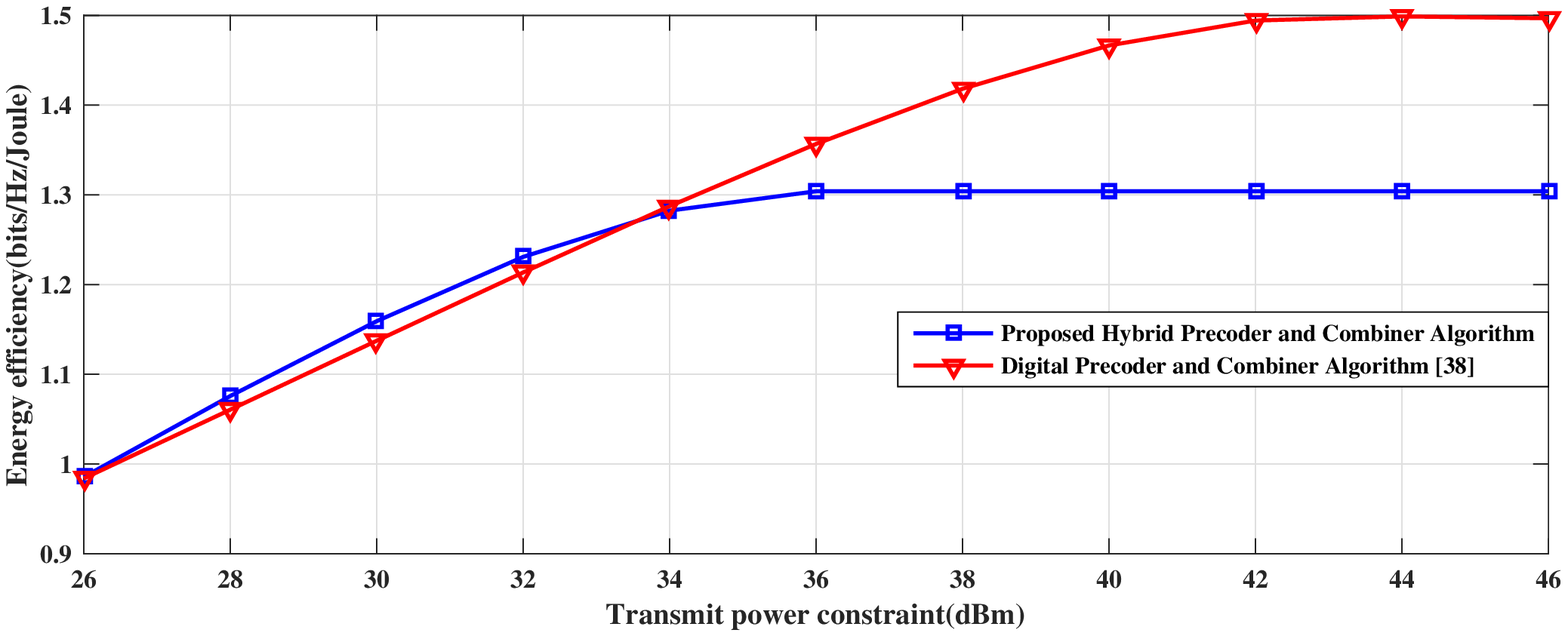}}
\caption{Energy efficiency performance comparison, $N_{t}=32$, $N_{RF}=8$, $N_{r}=4$, $\varepsilon=10^{-4}$.}
\label{EnergyEfficiencyComparison2}
\end{figure}

\section{Conclusions}

In this paper, we investigated the energy efficient design of the hybrid precoder and combiner with sub-connected architecture. A two layer optimization method was presented to solve the problem of interest. First, the analog precoder and combiner were optimized via the ADOM where the phase shifter can be easily adjusted with an analytical structure. Then, the digital precoder and combiner was optimized for an effective MIMO communication systems. The convergence of the proposed algorithms were proven by using the monotonic boundary theorem and fractional programming theory. Extensive simulation results were given to validate the effectiveness of the developed method and evaluate the energy efficiency performance under various circuit power consumption model and system configuration.

%

\begin{small}

\end{small}

\end{document}